\documentclass[11pt]{article}
\usepackage{palatino}
\usepackage{pgfplots} 
\usepackage{tikz, tikzpeople}
\usetikzlibrary{calc,intersections,decorations.pathmorphing,decorations.markings,patterns,arrows.meta,positioning}
\usetikzlibrary{angles,quotes}
\usetikzlibrary{arrows.meta}
\usetikzlibrary{arrows}
\usetikzlibrary{shapes.multipart}
\usetikzlibrary{fit}
\usetikzlibrary{shapes.geometric,positioning}
\usetikzlibrary{calc}
\usepackage{bm}
\usepackage{mathtools}
\usepackage{xfrac}
\usepackage{graphicx}
\usepackage[linesnumbered,ruled,vlined]{algorithm2e}
\usepackage{cite}
\usepackage{amsmath,amssymb,amsfonts, amsthm,  mathrsfs}
\usepackage[linesnumbered,ruled,vlined]{algorithm2e}
\usepackage{algpseudocode}
\usepackage{textcomp}
\usepackage{xcolor}
\usepackage{braket}
\usepackage{enumitem}
\usepackage{cancel}
\usepackage{hyperref}
\usepackage{stfloats}
\usepackage{xparse}

\usetikzlibrary{patterns}
\newtheorem{lemma}{Lemma}
\newtheorem{theorem}{Theorem}

\newtheorem{remark}{Remark}
\newtheorem{corollary}{Corollary}

\newlength{\leftstackrelawd}
\newlength{\leftstackrelbwd}
\def\leftstackrel#1#2{\settowidth{\leftstackrelawd}%
{${{}^{#1}}$}\settowidth{\leftstackrelbwd}{$#2$}%
\addtolength{\leftstackrelawd}{-\leftstackrelbwd}%
\leavevmode\ifthenelse{\lengthtest{\leftstackrelawd>0pt}}%
{\kern-.5\leftstackrelawd}{}\mathrel{\mathop{#2}\limits^{#1}}}

\newcommand{\Sato}{\mathchoice
  {\mathrm{\scriptscriptstyle  Sato}} 
  {\mathrm{\scriptscriptstyle  Sato}} 
  {\mathrm{\scriptscriptstyle  Sato}} 
  {\mathrm{\scriptscriptstyle  Sato}} 
}

\newcommand{\KS}{\mathchoice
  {\mathrm{\scriptscriptstyle  KS}} 
  {\mathrm{\scriptscriptstyle  KS}} 
  {\mathrm{\scriptscriptstyle  KS}} 
  {\mathrm{\scriptscriptstyle  KS}} 
}

\newcommand{\KSp}{\mathchoice
  {\mathrm{\scriptscriptstyle KS+}} 
  {\mathrm{\scriptscriptstyle KS+}} 
  {\mathrm{\scriptscriptstyle KS+}} 
  {\mathrm{\scriptscriptstyle KS+}} 
}

\newcommand{\BLEC}{\mathchoice
  {\mathrm{\scriptscriptstyle BLEC}} 
  {\mathrm{\scriptscriptstyle BLEC}} 
  {\mathrm{\scriptscriptstyle BLEC}} 
  {\mathrm{\scriptscriptstyle BLEC}} 
}

\newcommand{\BSSC}{\mathchoice
  {\mathrm{\scriptscriptstyle BSSC}} 
  {\mathrm{\scriptscriptstyle BSSC}} 
  {\mathrm{\scriptscriptstyle BSSC}} 
  {\mathrm{\scriptscriptstyle BSSC}} 
}

\newcommand{\mN}{\mathsf{N}} 
\newcommand{\mZ}{\mathsf{Z}} 

\NewDocumentCommand{\mP}{e{_}}{%
  \mathsf{P}%
  \IfValueT{#1}{_{\!#1}}%
}

\newcommand{\mS}{\mathsf{S}} 
\newcommand{\mQ}{\mathsf{Q}}

\newcommand{\NS}{\mathchoice
  {\mathrm{\scriptscriptstyle NS}} 
  {\mathrm{\scriptscriptstyle NS}} 
  {\mathrm{\scriptscriptstyle NS}} 
  {\mathrm{\scriptscriptstyle NS}} 
}

\DeclareMathOperator{\conv}{conv}

\allowdisplaybreaks

\title{The Capacity Region of the Broadcast Channel with Non-Signaling Assistance}
\author{Yuhang Yao, Syed A. Jafar\\
{\small Nhu Department of Electrical Engineering and Computer Science}\\
{\small University of California Irvine, Irvine, CA 92697}\\
{\small \it Email: \{yuhangy5, syed\}@uci.edu}
}

\date{}

\begin{document}
\maketitle

\begin{abstract}
The capacity region of the $K$-user discrete memoryless broadcast channel is fully characterized when non-signaling (NS) assistance is available to the transmitter and all $K$ receivers. The NS-assisted capacity region is shown to coincide with Sato's region, i.e., the region defined by sum-rate bounds over all subsets of messages, where each bound corresponds to full cooperation among that subset of receivers under a worst-case joint channel law consistent with the marginal channels.
\end{abstract}

\section{Introduction}

The  ultimate limits of reliable data transmission over a (in general, multiuser) communication channel are of central interest in information theory. Such limits are represented by various notions of channel \emph{capacity}.\footnote{By capacity we refer in this work only to Shannon capacity formulations where  probability of error converges to zero in the limit of large blocklengths.} By relaxing external constraints, e.g., by allowing unlimited pre-shared resources between encoders and decoders, the capacity of a channel captures an intrinsic property of the channel itself. 

The \emph{classical} capacity of a channel allows only classical  resources, i.e., classical correlations to be shared in advance. It is known that stronger correlations, e.g., the non-local correlations enabled by pre-shared quantum entanglement between the encoders and decoders, can help achieve rates above the classical capacity in some cases \cite{Quek_Shor, leditzky2020playing, seshadri2023separation, Yao_Jafar_QEACWS}. In fact, recent work has noted that the gains can be quite large (cf. exponential or unbounded multiplicative gains in \cite{Yao_Jafar_Unlock}). The search for the ultimate limits thus requires a broadening of the scope of pre-shared resources. 

Arguably the model with the broadest scope is one that allows unlimited non-signaling (NS) correlations to be shared in advance. A resource is non-signaling if it does not constitute a communication channel by itself, i.e., the resource \emph{by itself} does not allow any communication among the distributed parties. It is known that NS-resources include quantum resources, and the inclusion is strict. The notion of capacity that allows encoders and decoders to share any non-signaling correlations in advance is called the \emph{NS-assisted capacity}.

NS-assisted capacity characterizations are thus far available primarily for single user point-to-point channels. Matthews showed in \cite{matthews2012linear} that NS assistance does not increase the capacity of a point-to-point channel. For a point-to-point channel with state, the capacity with channel state information at the transmitter (CSIT) is characterized in \cite{Yao_Jafar_CSITTP} if the CSIT is non-causal, and in \cite{Yao_Jafar_NSCWS} if the CSIT is causal. In both cases the NS-assisted capacity is the same as the classical capacity of the corresponding channel where the state is also provided to the receiver. Remarkably, NS-assistance can produce an unbounded multiplicative gain over classical capacity in such settings \cite[Example 1 (fading dirt channel)]{Yao_Jafar_CSITTP}.

Exact capacity (region) characterizations with NS-assistance have mostly remained  elusive thus far in multiuser settings. Fawzi and Ferm\'{e}  explored the NS-assisted capacity of the multiple access channel (MAC) in \cite{fawzi2024MAC}, and the broadcast channel (BC) in \cite{fawzi2024broadcast}. Multiplicative gains in capacity from NS-assistance  that have been noted thus far in MAC settings are relatively modest. Somewhat surprisingly, however, \cite{Yao_Jafar_NS_DoF} provides examples of $K$-user broadcast channels (for any number of users $K$) with  multiplicative gain approaching a factor of $K$. The NS-assisted capacity region of the broadcast channel is known in some special cases. For example, it is shown in \cite{fawzi2024broadcast} that NS-assistance does not improve the capacity region of a BC if the NS resource is shared only among the receivers, or if the BC is deterministic. If NS-assistance is only available between the transmitter and one of the users (say User $1$) of a $K=2$ user BC, then the capacity region is shown in \cite{Yao_Jafar_NS_DoF} to be the same as the classical capacity region for the setting where the desired message of User $2$ is made available in
advance as side-information to User $1$. For the $2$-user semi-deterministic BC, the capacity region is found in \cite{Yao_Jafar_NS_DoF}, and turns out to be the same as if NS-assistance was available only between the transmitter and the non-deterministic user. Beyond these special cases, the capacity region of the broadcast channel with NS-assistance is not known in general. This open problem is our focus in this paper.

As our main result, we fully characterize the capacity region of the $K$-user discrete memoryless broadcast channel when non-signaling (NS) assistance is available to the transmitter and all $K$ receivers. Specifically, we show (Theorem \ref{thm:NSBC_Kuser} in Section \ref{sec:NSBC_Kuser}) that the NS-assisted capacity region is the same as Sato's region, i.e., the region defined by sum-rate bounds over all subsets of messages, where each bound corresponds to full cooperation among that subset of receivers under a worst-case joint channel law consistent with the marginal channels. 

The essential idea underlying this result is the `\emph{authentication solution}' from our previous work in \cite{Yao_Jafar_NSCWS} (see also \cite{OTPmodel}) on the NS-assisted capacity of a point-to-point channel with causal CSIT. A NS-assisted coding scheme for a $K$-user BC can be represented as a $(K+1)$-partite box (a NS-box) with an input and output for each of the $K+1$ parties (the transmitter and the $K$ users). Intuitively, an authentication solution can be summarized as follows. The NS-box provides a secret \emph{key} (independent of the messages) to the transmitter. This key is transmitted on the broadcast channel. Each user obtains  its own channel output and provides it as an input to its own side of the NS-box. The NS-box  tests (authenticates) if the input provided by the user is jointly typical with the transmitted key under the distribution induced by the channel between the transmitter and that user. If the authentication is successful then the NS-box outputs the correct message for the authenticated user. Otherwise, with a certain probability it outputs a randomly chosen incorrect message. These probabilities are chosen carefully to \emph{exactly} satisfy the non-signaling constraints. Fortunately, because the channel's action ensures that atypical outputs only occur with vanishing probability, these choices can be tuned without hurting the asymptotic reliability. Finally, the terms corresponding to Sato's bound for each subset of users naturally emerge in such a scheme as the exponents in the upper bound on the probability of \emph{accidental typicality,} i.e., the probability that arbitrary inputs from that subset of users are all authenticated (Lemma \ref{lem:typical_off_K} in Section \ref{sec:preliminaries}).

\section{Preliminaries}
\subsection{Sets}
Let $\mathbb{N}$ denote the set of positive integers. Let $\mathbb{R}_{\geq 0}$ and $\mathbb{R}_{> 0}$ denote the sets of non-negative and positive real numbers, respectively. 
For $n\in \mathbb{N}$, the notation $[n]$ denotes the set of integers $\{1,2,\ldots, n\}$, and $x^n$ denotes $(x_1,x_2,\ldots, x_n)$.
We write $\mathcal{S} \subseteq \mathcal{T}$ if $\mathcal{S}$ is a subset of $\mathcal{T}$, and $\mathcal{S} \subset \mathcal{T}$ if $\mathcal{S}$ is a proper subset of $\mathcal{T}$.
For $\mathcal{S}=\{i_1,i_2,\ldots,i_m\}\subset\mathbb{N}$, the notation $(A_i)_{i\in\mathcal{S}}$ denotes the tuple $(A_{i_1},A_{i_2},\ldots,A_{i_m})$, where $i_1<i_2<\cdots<i_m$. 
We use $\mathbb{I}(\cdot)$ to denote the indicator function, which equals $1$ if the predicate is true and $0$ otherwise.

\subsection{Probability}
A probability mass function (pmf) on a finite-cardinality discrete set $\mathcal{X}$ is a function $\mP_{X}\colon \mathcal{X} \to \mathbb{R}_{\geq 0}$ satisfying $\sum_{x\in \mathcal{X}} \mP_X(x) = 1$. The set of all pmfs on $\mathcal{X}$ is denoted by $\mathcal{P}(\mathcal{X})$. 
The notation $X \sim P$ indicates that the random variable $X$ is distributed according to the pmf $\mP$.
Let $\mathcal{X} \times \mathcal{Y}$ denote the Cartesian product of $\mathcal{X}$ and $\mathcal{Y}$, and let $\mathcal{X}^n$ denote the $n$-fold Cartesian product of $\mathcal{X}$. 
A joint pmf $\mP_{X_1,\ldots,X_n}$ is simply a pmf on the Cartesian product $\mathcal{X}_1\times\cdots\times\mathcal{X}_n$.
A conditional pmf $\mP_{X\mid Y}$ associated with an ordered pair of discrete sets $(\mathcal{X},\mathcal{Y})$ is a function $\mP_{X\mid Y} \colon \mathcal{X}\times \mathcal{Y} \to \mathbb{R}_{\geq 0}$ such that, for every $y\in \mathcal{Y}$, $\sum_{x}\mP_{X\mid Y}(x\mid y) = 1$. The set of all conditional pmfs associated with $(\mathcal{X}, \mathcal{Y})$ is denoted by $\mathcal{P}(\mathcal{X} \mid \mathcal{Y})$. 
We define $\mP_{Y\mid X=x}$ as the pmf in $\mathcal{P}(\mathcal{Y})$ for which $\mP_{Y\mid X=x}(y) = \mP_{Y\mid X}(y\mid x)$.
Given a joint pmf $\mP_{XY}$, the marginal pmfs $\mP_X$ and $\mP_Y$, as well as the conditional pmfs $\mP_{Y\mid X}$ and $\mP_{X\mid Y}$, are defined in the usual way. Whenever $\mP_X(x)=0$, the conditional pmf $\mP_{Y\mid X=x}$ may be chosen arbitrarily from $\mathcal{P}(\mathcal{Y})$.
The notation $\mP_X^{\otimes n}(x^n)$ denotes the product pmf $\mP_X^{\otimes n}(x^n)=\prod_{i=1}^n \mP_{X}(x_i)$, and $\mP_{Y\mid X}^{\otimes n}(y^n \mid x^n) = \prod_{i=1}^n \mP_{Y\mid X}(y_i\mid x_i)$.
We use $\Pr(E)$ to denote the probability of a random event $E$, and $\mathbb{E}[X]$ to denote the expectation of $X$.

\subsection{Information measures}
For $X\sim \mP_X$, let $H_{\mP_{X}}(X) = -\sum_{x} \mP_X(x) \log_2 \mP_X(x) = -\mathbb{E}_{X\sim \mP_X}[\log_2(\mP_X(X))]$ denote the entropy of $X$. For $(X,Y)\sim \mP_{XY}$, let $I_{\mP_{XY}}(X;Y) = \sum_{(x,y)}\mP_{XY}(x,y) \log_2 \frac{\mP_{XY}(x,y)}{\mP_{X}(x)\mP_Y(y)}$ denote the mutual information of $X,Y$. The subscript indicating the underlying distribution may be omitted when it is clear from the context. 
For $\mP_X, \mQ_X \in \mathcal{P}(\mathcal{X})$, let $D(\mP_X\Vert \mQ_X)$ denote the relative entropy (KL divergence), i.e., $D(\mP_X\Vert \mQ_X) = \sum_{x\in \mathcal{X}}\mP_X(x)\log_2\frac{\mP(x)}{\mQ(x)}$. 
For $\mP_{XY}, \mQ_{XY} \in \mathcal{P}(\mathcal{X} \times \mathcal{Y})$, let $D(\mP_{Y|X}\Vert \mQ_{Y|X} ~|~\mP_X) = \sum_{x\in \mathcal{X}}\mP_X(x) D(\mP_{Y\mid X=x} \Vert \mQ_{Y\mid X=x})$ denote the conditional relative entropy.

Let $\kappa \in \mathbb{N}$. For $\kappa$ conditional pmfs $\mP_{Y_1\mid X}, \mP_{Y_2\mid X}, \ldots, \mP_{Y_\kappa\mid X}$, where $\mP_{Y_k\mid X} \in \mathcal{P}(\mathcal{Y}_k\mid \mathcal{X}), \forall k\in [\kappa]$,  define
\begin{equation}
\begin{aligned}
	&\Gamma(\mP_{Y_1\mid X}, \mP_{Y_2\mid X}, \ldots, \mP_{Y_\kappa\mid X}) \triangleq \Big\{ \mQ_{Y_1Y_2\cdots Y_\kappa\mid X} \in \mathcal{P}(\mathcal{Y}_1\times \cdots \times \mathcal{Y}_\kappa \mid \mathcal{X}) : \\
	&\hspace{7cm} \mQ_{Y_k\mid X} = \mP_{Y_k\mid X}, \forall k\in [\kappa] \Big\}
\end{aligned}
\end{equation}
as the subset that consists of the conditional pmfs $\mQ_{Y_1Y_2\cdots Y_K\mid X}$ whose $K$ marginal pmfs for $Y_1, Y_2,\ldots, Y_K$ conditioned on $X$ are equal to $\mP_{Y_1\mid X}, \mP_{Y_2\mid X}, \ldots, \mP_{Y_K\mid X}$, respectively.
Given $\mP_X \in \mathcal{P}(\mathcal{X})$ and $\mP_{Y_k\mid X} \in \mathcal{P}(\mathcal{Y}_k\mid \mathcal{X})$ for all $k\in[\kappa]$, define the function
\begin{align} \label{eq:def_JK}
	J(\mP_X, (\mP_{Y_1\mid X}, \ldots, \mP_{Y_\kappa\mid X})) \triangleq \min_{\mQ_{Y_1\cdots Y_\kappa\mid X}\in \Gamma(\mP_{Y_1\mid X}, \ldots, \mP_{Y_\kappa\mid X})} I_{\mP_X\mQ_{Y_1\cdots Y_\kappa\mid X}}(X;Y_1,\ldots, Y_{\kappa}).
\end{align}
The minimum in \eqref{eq:def_JK} exists because $\Gamma(\mP_{Y_1\mid X}, \mP_{Y_2\mid X},\ldots, \mP_{Y_K\mid X})$ is a nonempty compact polytope ($\mathcal{X},\mathcal{Y}_1,\dots,\mathcal{Y}_\kappa$ are finite-cardinality discrete sets) and mutual information is continuous.
For the special case of $\kappa=1$, the definition reduces to a mutual information, i.e., $J(\mP_X,\mP_{Y_1\mid X}) = I_{\mP_X\mP_{Y_1\mid X}}(X_1;Y)$.

\subsection{Types and strong typicality}
For a sequence $x^n\in \mathcal{X}^n$, its empirical pmf (type) $\widehat{\mP}_{x^n}\in \mathcal{P}(\mathcal{X})$ is defined by
\begin{align} \label{eq:def_type}
	\widehat{\mP}_{x^n}(x) \triangleq \frac{1}{n} \sum_{i=1}^n \mathbb{I}(x_i=x), \qquad \forall x\in \mathcal{X}.
\end{align} 
Given $\mP_X\in \mathcal{P}(\mathcal{X})$, $\epsilon \in (0,1)$ and $n\in \mathbb{N}$, define the (strongly) typical set
\begin{align} \label{eq:def_typical_set}
	\mathcal{T}_{\epsilon}^{(n)}(\mP_X) \triangleq \big\{ x^n \in \mathcal{X}^n \colon |\widehat{\mP}_{x^n}(x)-\mP_X(x)| \leq \epsilon \mP_X(x), \forall x\in \mathcal{X} \big\}.
\end{align}
For jointly distributed random variables $(X_1,\ldots, X_m)$, the jointly typical set is defined by treating $(X_1,\ldots, X_m)$ as a single random variable.

\section{Problem formulation}
\subsection{Non-signaling assisted $K$-user discrete memoryless broadcast channel}
A $K$-user discrete memoryless broadcast channel (DM-BC) is specified by a conditional pmf $\mN_{Y_1Y_2\cdots Y_K\mid X} \in \mathcal{P}(\mathcal{Y}_1 \times \mathcal{Y}_2 \times \cdots \times \mathcal{Y}_K \mid \mathcal{X})$ with underlying finite-cardinality discrete sets $\mathcal{Y}_1, \mathcal{Y}_2,\ldots, \mathcal{Y}_K, \mathcal{X}$. For each use of the channel, the transmitter chooses an input $x\in \mathcal{X}$, and the probability that, for each $k\in [K]$, User $k$ receives $Y_k=y_k\in \mathcal{Y}_k$ conditioned on the channel input $X=x\in \mathcal{X}$, is equal to $\mN_{Y_1Y_2\cdots Y_K\mid X}(y_1,y_2,\ldots, y_K\mid x)$.
\begin{figure}[htbp]
\center
\begin{tikzpicture}[scale=0.75, transform shape]
	\node (Z) [
		rectangle,
		draw,
		thick,
		minimum width=13cm,
		minimum height=2cm
	] at (0,0) {};

	\draw[thick] (-1.5,1) -- (-1.5,-1);
	\draw[thick] (0.5,1) -- (0.5,-1);
	\draw[thick] (2.5,1) -- (2.5,-1);
	\draw[thick] (4.5,1) -- (4.5,-1);

	\node [
		rectangle,
		minimum height=0.7cm,
		fill=gray!30
	] at (0,0) {
		\small
		$\mZ
		\big(
			x^n,\widehat{w}_1,\widehat{w}_2,\ldots,\widehat{w}_K
			\mid
			(w_1,w_2,\ldots,w_K),
			y_1^n,y_2^n,\ldots,y_K^n
		\big)$
	};

	\node (IT) at (-4.5,0.9) {};
	\node[below=-0.25cm of IT]
		{\small $(w_1,w_2,\ldots,w_K)$};

	\node (IR1) [rectangle, minimum width=0.6cm] at (-0.5,-0.9) {};
	\node[above=-0.3cm of IR1] {\small $y_1^n$};

	\node (IR2) [rectangle, minimum width=0.6cm] at (1.5,-0.9) {};
	\node[above=-0.3cm of IR2] {\small $y_2^n$};

	\node (IRK) [rectangle, minimum width=0.6cm] at (5.5,-0.9) {};
	\node[above=-0.3cm of IRK] {\small $y_K^n$};

	\node (OT) at ($(IT.south)+(0,-1.55)$) {\small $x^n$};

	\node (OR1) at ($(IR1.north)+(0,1.5)$) {\small $\widehat{w}_1$};
	\node (OR2) at ($(IR2.north)+(0,1.5)$) {\small $\widehat{w}_2$};
	\node (ORK) at ($(IRK.north)+(0,1.5)$) {\small $\widehat{w}_K$};

	\node (N) [
		rectangle,
		draw,
		thick,
		minimum width=3cm,
		minimum height=2cm
	] at (-2.5,-3.5) {};

	\node [
		rectangle,
		minimum height=1.4cm,
		fill=gray!30
	] at (-2.5,-3.5) {
		\small
		$\mathsf{N}^{\otimes n}_{Y_1Y_2\cdots Y_K\mid X}$
	};

	\node (X) at ($(N.west)+(0.1,0)$) {};

	\node (Y1) at ($(N.east)+(-0.15,0.7)$) {};
	\node (Y2) at ($(N.east)+(-0.15,0.25)$) {};
	\node (YD)  at ($(N.east)+(-0.15,-0.2)$) {};
	\node (YK)  at ($(N.east)+(-0.15,-0.7)$) {};

	\node (Tx) [above=1cm of IT] {\sc Transmitter};

	\node (Rx1) [above=1cm of OR1] {\sc User $1$};
	\node (Rx2) [above=1cm of OR2] {\sc User $2$};
	\node at (3.5,2.3) {$\cdots$};
	\node (RxK) [above=1cm of ORK] {\sc User $K$};

	\node (C) [below=0cm of N] {\small\sc Broadcast channel};

	\draw[-{Latex[length=2.5mm]},thick]
		(Tx) -- (IT)
		node[pos=0.5,left] {$(W_1,W_2,\ldots,W_K)$};

	\draw[-{Latex[length=2.5mm]},thick]
		(OT) |- (X)
		node[pos=0.25,left] {$X^n$};

	\draw[-{Latex[length=2.5mm]},thick]
		(Y1) -| (IR1)
		node[pos=0.66,right] {$Y_1^n$};

	\draw[-{Latex[length=2.5mm]},thick]
		(Y2) -| (IR2)
		node[pos=0.72,right] {$Y_2^n$};

	\node[right=0.1cm of YD] {\small $\vdots$};

	\draw[-{Latex[length=2.5mm]},thick]
		(YK) -| (IRK)
		node[pos=0.80,right] {$Y_K^n$};

	\draw[-{Latex[length=2.5mm]},thick]
		(OR1) -- (Rx1)
		node[pos=0.5,right] {$\widehat{W}_1$};

	\draw[-{Latex[length=2.5mm]},thick]
		(OR2) -- (Rx2)
		node[pos=0.5,right] {$\widehat{W}_2$};

	\draw[-{Latex[length=2.5mm]},thick]
		(ORK) -- (RxK)
		node[pos=0.5,right] {$\widehat{W}_K$};

\end{tikzpicture}
\caption{NS-assisted coding scheme $\mZ$ for the $K$-user broadcast channel.} \label{fig:BC_K}
\end{figure}

Let $M_1,M_2,\ldots, M_K, n$ be positive integers. 
An $(M_1,M_2,\ldots, M_K,n)$ NS-assisted coding scheme operating on $n$ channel uses is specified by a conditional pmf $\mZ\in \mathcal{P}(\mathcal{X}^n \times [M_1]\times [M_2] \times \cdots \times [M_K] \mid [M_1]\times [M_2] \times \cdots \times [M_K] \times \mathcal{Y}_1^n \times \mathcal{Y}_2^n \times \cdots \times \mathcal{Y}_K^n)$. The conditional pmf $\mZ$ should satisfy the $(K+1)$-partite NS conditions,
\begin{enumerate}[label=\textit{C\arabic*}:, start=0]
	\item $\sum_{x^n}\mZ(x^n,\widehat{w}_1,\widehat{w}_2,\ldots,\widehat{w}_K\mid (w_1,w_2,\ldots,w_K), y_1^n,y_2^n,\ldots,y_K^n)$ must be invariant under changes of $(w_1,w_2,\ldots, w_K)$,
\end{enumerate}
and for each $k\in [K]$,
\begin{enumerate}[label=\textit{Ck}:]
	\item $\sum_{\widehat{w}_k}\mZ(x^n,\widehat{w}_1,\widehat{w}_2,\ldots,\widehat{w}_K\mid (w_1,w_2,\ldots,w_K), y_1^n,y_2^n,\ldots,y_K^n)$ must be invariant under changes of $y_k^n$.
\end{enumerate}

Such a scheme $\mZ$ is illustrated in Fig. \ref{fig:BC_K}, represented as a black box shared across all parties where each party has its own input into and output out of the box $\mZ$, and works as follows. 
The transmitter has $K$ independent messages $(W_1,W_2,\ldots, W_K)$, where for each $k\in [K]$, $W_k$ is drawn uniformly from $[M_k]$ and is intended for User $k$. 
At the transmitter, the input to the scheme is $(W_1,W_2,\ldots, W_K)$. The scheme produces $X^n$ according to the conditional pmf $\mZ(x^n\mid (w_1,w_2,\ldots, w_K))$. 
Note that the conditional pmf is defined regardless of the values of $(y_1^n, y_2^n, \ldots, y_K^n)$, because \textit{C1}--\textit{CK} together imply that the conditional marginal pmf of $\mZ$ for $X^n$, which is obtained by summing  over all $(\widehat{w}_1, \widehat{w}_2,\ldots, \widehat{w}_K)$ in $\mZ(x^n,\widehat{w}_1,\widehat{w}_2,\ldots,\widehat{w}_K\mid (w_1,w_2,\ldots,w_K), y_1^n,y_2^n,\ldots,y_K^n)$, does not depend on $(y_1^n, y_2^n, \ldots, y_K^n)$. 
The transmitter then sends $X^n$ through $n$ uses of the broadcast channel. For each $k\in [K]$, User $k$ receives $Y_k^n$ from the channel and inputs $Y_k^n$ to the scheme. 
The scheme produces $(\widehat{W}_1,\widehat{W}_2,\ldots, \widehat{W}_K)$, where $\widehat{W}_k$ is produced at User $k$, according to the conditional joint pmf $\mZ(\widehat{w}_1, \widehat{w}_2, \ldots, \widehat{w}_K \mid (w_1,w_2,\ldots, w_K), x^n, y_1^n, y_2^n, \ldots, y_K^n)$.

The joint distribution for the random variables $(W_1,\ldots, W_K, X^n, Y_1^n,  \ldots, Y_K^n, \widehat{W}_1,  \ldots, \widehat{W}_K)$ is (cf. \cite[Eq. (9)]{Yao_Jafar_NS_DoF}),
\begin{align}
	&\Pr(W_k=w_k, X^n=x^n, Y_k^n=y_k^n, \widehat{W}_k = \widehat{w}_k, \forall k\in [K]) \notag \\
	&=\big( \prod_{k=1}^K M_k \big)^{-1}\prod_{i=1}^n \mN_{Y_1Y_2\cdots Y_K\mid X}(y_{1,i},y_{2,i},\ldots, y_{K,i}\mid x_i) \notag \\
	&\hspace{1cm} \times \mZ(x^n,\widehat{w}_1,\ldots, \widehat{w}_K\mid (w_1,w_2,\ldots, w_K), y_1^n, y_2^n, \ldots, y_K^n)
\end{align}
For each $k\in [K]$, define the probability of decoding error for User $k$ as,
\begin{align}
P_{e,k}(\mZ) = \Pr(\widehat{W}_k\neq W_k).
\end{align}
 
\subsection{Achievable rate tuples and capacity region}
Given the channel $\mN_{Y_1Y_2\cdots Y_K \mid X}$, a rate tuple $(R_1,R_2,\ldots, R_K)\in \mathbb{R}_{\geq 0}^K$ is said to be achievable by NS-assisted coding schemes if there exists a sequence of $(M_{1}^{(n)}, \ldots, M_{K}^{(n)},n)$ NS-assisted coding schemes $(\mZ_n)_{n\in \mathbb{N}}$ such that 
\begin{align}
\lim_{n\to \infty} P_{e,k} (\mZ_n) &= 0,\\
\mbox{and }\liminf_{n\to \infty} \frac{\log_2 M_{k}^{(n)}}{n} &\geq R_k, \quad\forall k\in [K].
\end{align}
The NS-assisted capacity region $\mathcal{C}^{\NS}(\mN_{Y_1Y_2\ldots Y_K\mid X})$ is defined as the closure of all rate tuples achievable by NS-assisted coding schemes.

\begin{remark}[Same-marginals property] \label{rem:same_marginals}
	Similar to a known result for classical coding schemes of broadcast channels, it is shown in \cite[Thm. 2]{Yao_Jafar_NS_DoF} that for NS-assisted coding schemes, the probability of decoding error for User $k$, i.e., $P_{e,k}(\mZ)$ depends only on $\mZ$ and $\mN_{Y_k\mid X}$, i.e., the channel's conditional marginal pmf for User $k$. It follows that $\mathcal{C}^{\NS}(\mN_{Y_1Y_2\cdots Y_K\mid X})$ is determined by the $K$ conditional marginal pmfs $(\mN_{Y_1\mid X}, \mN_{Y_2\mid X}, \ldots, \mN_{Y_K\mid X})$.
\end{remark}

\section{Results}
\subsection{NS-assisted capacity region of $K$-user DM-BC}\label{sec:NSBC_Kuser}
The main result of this paper is stated in the following theorem.
\begin{theorem}[NS-assisted $K$-user BC capacity region] \label{thm:NSBC_Kuser}
	For a $K$-user discrete memoryless broadcast channel $\mN_{Y_1Y_2\cdots Y_K\mid X}$, the NS-assisted capacity region is
	\begin{align}
		\mathcal{C}^{\NS}(\mN_{Y_1Y_2\cdots Y_K\mid X}) = \mathcal{R}_{\Sato}(\mN_{Y_1Y_2\cdots Y_K\mid X}),
	\end{align}
	where the `Sato' region is defined as,
	\begin{equation}
	\begin{aligned}
		&\mathcal{R}_{\Sato}(\mN_{Y_1Y_2\cdots Y_K\mid X}) \\
		& \triangleq\bigcup_{\mP_X\in \mathcal{P}(\mathcal{X})} \left\{
	  	\begin{array}{l}
	    	(R_1,R_2,\ldots, R_K) \in \mathbb{R}_{\geq 0}^K \colon \\
	    	\sum_{k\in \mathcal{K}} R_k \leq J(\mP_X, (\mN_{Y_{\ell}\mid X})_{\ell \in \mathcal{K}}),~~\forall \emptyset \neq  \mathcal{K} \subseteq [K] 
	  \end{array}
	  \right.
	\end{aligned}
	\end{equation}
	and $J(\mP_X, (\mN_{Y_{\ell}\mid X})_{\ell \in \mathcal{K}})$ is defined in \eqref{eq:def_JK}.
\end{theorem}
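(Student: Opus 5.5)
We prove the two inclusions separately.

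\emph{Converse} ($\mathcal{C}^{\NS}\subseteq\mathcal{R}_{\Sato}$). Fix a nonempty $\mathcal{K}\subseteq[K]$ and any $\mQ_{(Y_\ell)_{\ell\in\mathcal{K}}\mid X}\in\Gamma((\mN_{Y_\ell\mid X})_{\ell\in\mathcal{K}})$. By the same-marginals property (Remark~\ref{rem:same_marginals}), each $P_{e,k}(\mZ)$, $k\in\mathcal{K}$, is unchanged if the channel is replaced by one whose joint law on $(Y_\ell)_{\ell\in\mathcal{K}}$ is $\mQ$.

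Next let the users in $\mathcal{K}$ cooperate. Merge their parties, together with the parties outside $\mathcal{K}$, into a single receiver. Since the box $\mZ$ is non-signaling between the transmitter and the union of the remaining parties, this turns the scheme into an NS-assisted point-to-point code for the channel $\mQ$ that carries $(W_k)_{k\in\mathcal{K}}$. The parties outside $\mathcal{K}$ can be simulated locally with arbitrary inputs, which is legitimate because of the NS conditions. The messages $W_k$, $k\notin\mathcal{K}$, can be treated as shared randomness at the transmitter.

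Matthews' result~\cite{matthews2012linear}, or directly a meta-converse/Fano argument with a product input distribution, bounds the sum rate by the point-to-point NS capacity, namely $\sum_{k\in\mathcal{K}}R_k\le\max_{\mP_X}I(X;(Y_\ell)_{\ell\in\mathcal{K}})$ under $\mQ$.

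This bound has the wrong quantifier order: it is a max over $\mP_X$ for each fixed $\mQ$, and it must hold for all $\mathcal{K}$ with a common $\mP_X$. To fix this, I would avoid the capacity formula and use a single-letterization with a common time-sharing input distribution. From the NS scheme, $\mP_{X^n}$ is the marginal of $X^n$ induced by $\mZ$ with uniform messages. For every $\mathcal{K}$, a Fano-type argument for NS codes (as in the standard NS converse via hypothesis testing against $\mP_{X^n}\times\mQ_{Y^n}$-like distributions) should give
\[
\sum_{k\in\mathcal{K}}R_k\lesssim \tfrac1n\sum_i I(X_i;(Y_{\ell,i})_{\ell\in\mathcal{K}})
\]
evaluated under $\mQ^{\otimes n}$. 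Concavity in $\mP_X$ then gives the bound with $\bar{\mP}_X=\tfrac1n\sum_i\mP_{X_i}$, and this is simultaneous over all $\mathcal{K}$. Since this holds for every $\mQ$, we may take the minimizing $\mQ$, which yields $J(\bar{\mP}_X,\cdot)$.

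The delicate step in the converse is establishing this NS-compatible Fano/meta-converse with $\mP_{X^n}$ independent of $\mathcal{K}$. I would use the hypothesis-testing bound $\log M\le -\log\beta_{1-\epsilon}(\mP_{X^nY^n},\mP_{X^n}\times\mathsf{S}_{Y^n})$. Here the NS condition $C0$ ensures that, with the $Y$'s replaced by independent ones, the success probability is at most $1/\prod_{k\in\mathcal{K}}M_k$.

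\emph{Achievability.} Fix $\mP_X$ and rates strictly inside the Sato region for it. We use the authentication scheme described in the introduction. The box draws a key $X^n\sim\mP_X^{\otimes n}$ independent of the messages. For each subset $\mathcal{S}$ of users, define the event $A_{\mathcal{S}}$ that $y_k^n$ is jointly typical with $x^n$ under $\mP_X\mN_{Y_k\mid X}$ for every $k\in\mathcal{S}$. User $k$ outputs $w_k$ when authenticated; otherwise it outputs a message drawn from a distribution chosen so that the $\widehat{W}_k$ marginal is uniform regardless of $y_k^n$.

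Concretely, I would write $\mZ$ via inclusion–exclusion over subsets, with correction terms weighted by $\Pr(A_{\mathcal{S}})$ computed under independent $X^n$. Lemma~\ref{lem:typical_off_K} bounds the probability of accidental joint typicality for a subset $\mathcal{K}$ by $2^{-n(J(\mP_X,(\mN_{Y_\ell\mid X})_{\ell\in\mathcal{K}})-\delta)}$. These correction weights are nonnegative precisely when $\prod_{k\in\mathcal{K}}M_k\cdot 2^{-nJ}\to0$, which is exactly the Sato constraint.

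Conditions $C0$–$CK$ are checked by direct summation. Reliability follows because true outputs are typical with probability $\to1$.

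The main obstacle is constructing $\mZ$ so that it is nonnegative and exactly satisfies all the $(K+1)$-partite NS conditions simultaneously, with multi-user corrections.
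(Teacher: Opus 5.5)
Your converse is essentially the paper's. It uses the same-marginals replacement of the channel by any $\mathsf{Q}\in\Gamma$ and an NS Fano bound; your hypothesis-testing meta-converse against a product reference measure is the same change-of-measure argument the paper packages as Lemma~\ref{lem:dp}. It then single-letterizes with the time-shared input law, which is fixed by the box and so common to all $\mathcal{K}$, and finally minimizes over $\mathsf{Q}$. That part is fine.

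\textbf{The gap is in the achievability}, which is the real content of the theorem. You describe the authentication idea only at the level of the introduction. You then name the construction of a nonnegative box satisfying \textit{C0}--\textit{C}$K$ exactly as ``the main obstacle'' without resolving it. The ``correction weights'' that you say are nonnegative ``precisely when'' $\prod_{k\in\mathcal{K}}M_k2^{-nJ}\to 0$ are never defined, and that condition is at best sufficient.

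The obstacle is real. Making each $\widehat{W}_k$ marginally uniform is only the single-user part of \textit{C0}. Condition \textit{C0} constrains the joint law of $(\widehat{W}_1,\dots,\widehat{W}_K)$, and \textit{C}$k$ constrains the joint law of $X^n$ and the other users' outputs. Consider the natural completion of your sketch: each user decides independently given $x^n$, correct with probability $1$ if authenticated and with probability $c_k=(1/M_k-t_k)/(1-t_k)$ otherwise. Here $t_1,t_2,t_{12}$ are the probabilities, under an independent $X^n\sim\mathsf{P}_X^{\otimes n}$, that $y_1^n$, $y_2^n$, or both pass. This scheme satisfies \textit{C1}--\textit{C}$K$ but violates \textit{C0}, because $\mathbb{E}[a_1a_2]-\frac{1}{M_1M_2}=(1-c_1)(1-c_2)(t_{12}-t_1t_2)\neq 0$ in general. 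The typicality events of different users are correlated through the common $X^n$.

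\textbf{What you need to supply} is the following.
\begin{enumerate}
\item[(i)] A reduction of the NS conditions to conditions on a correctness pattern. Draw wrong messages uniformly from the $M_k-1$ alternatives, and draw $(A_1,\dots,A_K)$ given $(x^n,y_1^n,\dots,y_K^n)$. Then \textit{C}$k$ holds if $a_{\mathcal{K}}=\Pr(A_k=1\ \forall k\in\mathcal{K})$ depends only on $(x^n,(y_k^n)_{k\in\mathcal{K}})$. Condition \textit{C0} reduces to the moment conditions $\mathbb{E}_{X^n}[a_{\mathcal{K}}]=\prod_{k\in\mathcal{K}}M_k^{-1}$ for every nonempty $\mathcal{K}$.
\item[(ii)] A subset-indexed joint correction $a_{\mathcal{K}}=c_{\mathcal{S}\cap\mathcal{K}}$. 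Here $\mathcal{S}$ is the set of users failing authentication, $c_\emptyset=1$, and $c_{\mathcal{K}}$ depends only on $(y_k^n)_{k\in\mathcal{K}}$ and is solved recursively from the moment conditions.
\item[(iii)] A positivity proof. Lemma~\ref{lem:typical_off_K}, which holds uniformly in the $y$'s, together with all $2^K-1$ Sato inequalities gives $t_{\mathcal{K}}=o(1/\prod_{k\in\mathcal{K}}M_k)$. Induction then gives $c_{\mathcal{K}}\simeq 1/\prod_{k\in\mathcal{K}}M_k$. M\"obius inversion shows that each atom $\Pr(A_k=1,\,k\in\mathcal{K};\,A_{k'}=0,\,k'\notin\mathcal{K})$ vanishes unless $[K]\setminus\mathcal{K}\subseteq\mathcal{S}$, and is otherwise $\simeq c_{\mathcal{S}\cap\mathcal{K}}>0$.
\end{enumerate}

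This last step is exactly where the Sato constraints enter. It is what makes the box simultaneously nonnegative and exactly non-signaling, and your proposal asserts it rather than proving it.
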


\noindent The converse, $\mathcal{C}^{\NS} \subseteq \mathcal{R}_{\Sato}$, for Theorem \ref{thm:NSBC_Kuser}  is presented in Appendix \ref{proof:converse}, and is relatively straightforward, as it follows from a generalization of the two-user converse \cite[Thm. 4]{Yao_Jafar_CSITTP}. Intuitively, for each $\mP_X$, a data-processing argument and the same-marginals property, combined with the facts that cooperation among receivers cannot hurt, and that NS-assistance cannot increase the capacity of the point-to-point discrete memoryless channel \cite{matthews2012linear} obtained by receiver-cooperation, lead to the converse bounds.

The proof of achievability, $\mathcal{R}_{\Sato} \subseteq \mathcal{C}^{\NS}$, is the main challenging part of the theorem.  The key ingredient for this proof is the `authentication solution' previously developed for the proof of the NS-assisted capacity of a point-to-point channel with causal CSIT in \cite{Yao_Jafar_NSCWS} (see also \cite{OTPmodel}). Intuitively, the authentication solution $\mZ$ strives to output the correct desired message $\widehat{w}_k=w_k$ at each User $k$, $k\in[K]$, if that user's input $y_k^n$ to the box $\mZ$ passes a certain authentication test, namely that it is jointly typical with the output $x^n$ that $\mZ$ produced at the transmitter to be sent over the channel, with typicality defined according to the marginal distribution $\mP_{X Y_k}=\mP_X \mN_{Y_k|X}$ induced by the channel between the transmitter and that user. The delicate part of the construction is to balance the probability values so that the NS conditions are satisfied \emph{exactly}. To convey the main ideas in a simpler setting, we first provide the achievability proof for the two-user $(K=2)$ BC in Section \ref{proof:NSBC_capacity}. The proof for the $K$-user BC is relegated to Appendix \ref{proof:NSBC_Kuser}.

\subsection{Resolving an open question in \cite{Yao_Jafar_CSITTP}}
Specializing Theorem \ref{thm:NSBC_Kuser} to $K=2$, the NS-assisted capacity region of a two-user DM-BC $\mN_{Y_1Y_2\mid X}$ can be stated explicitly as,
\begin{align}
\mathcal{C}^{\NS}(\mN_{Y_1Y_2\mid X})= \mathcal{R}_{\Sato}(\mN_{Y_1Y_2\mid X}) = \bigcup_{\mP_X \in \mathcal{P}(\mathcal{X})} \left\{
  \begin{array}{l}
    (R_1,R_2)\in \mathbb{R}_{\geq 0}^{2}\colon \\
    R_1 \leq I_{\mP_X \mN_{Y_1\mid X}}(X;Y_1) \\
    R_2 \leq I_{\mP_X \mN_{Y_2\mid X}}(X;Y_2) \\
    R_1 + R_2 \leq J(\mP_X, (\mN_{Y_1\mid X}, \mN_{Y_2\mid X}))
  \end{array}
  \right.
\end{align}
Observe that $\mathcal{R}_{\Sato}(\mN_{Y_1Y_2\mid X}) $ is precisely the Sato's region for the $2$-user DM-BC (cf. \cite[Eq. (25)]{Yao_Jafar_CSITTP}, \cite[Problem 8.8]{NIT}). It is shown in \cite[Thm. 4]{Yao_Jafar_CSITTP} that $\mathcal{C}^{\NS}(\mN_{Y_1Y_2\mid X}) \subseteq \mathcal{R}_{\Sato}(\mN_{Y_1Y_2\mid X})$, and it was left as an open question whether $\mathcal{R}_{\Sato}(\mN_{Y_1Y_2\mid X})$ is achievable by NS-assisted coding schemes for a general two-user DM-BC.  Theorem \ref{thm:NSBC_Kuser} thus answers this question in the affirmative.

\subsection{A key lemma} \label{sec:preliminaries}
The following lemma represents a key technical argument for the achievability proof of Theorem \ref{thm:NSBC_Kuser}, as it essentially bounds the probability that arbitrary inputs applied to $\mZ$ by a subset of users are able to simultaneously pass the authentication test.
\begin{lemma}[Simultaneous typicality upper bound] \label{lem:typical_off_K}
	Let $\kappa\in \mathbb{N}, n\in \mathbb{N}, \epsilon \in (0,1)$, and $\mP_X\in \mathcal{P}(\mathcal{X})$. For each $k\in [\kappa]$, let $\mP_{Y_k\mid X}\in \mathcal{P}(\mathcal{Y}_k\mid \mathcal{X})$ and let $y_k^n \in \mathcal{Y}_k^n$ be arbitrary. Let $X^n\sim \mP_X^{\otimes n}$.  Then there exists $\delta(\epsilon) \geq 0$ that tends to $0$ as $\epsilon \to 0$ such that $\Pr\big((X^n, y_k^n) \in \mathcal{T}^{(n)}_{\epsilon}(\mP_X \mP_{Y_k\mid X}), \forall k\in [\kappa] \big) \leq 2^{-n (J(\mP_X, (\mP_{Y_k\mid X})_{k\in [\kappa]})-\delta(\epsilon))}$, where $J(\mP_X, (\mP_{Y_k\mid X})_{k\in [\kappa]})$ is defined in \eqref{eq:def_JK}.
\end{lemma}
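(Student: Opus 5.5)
We will prove the bound with the method of types, conditioning on the joint type of the fixed output sequences. Write $y^n=(y_1^n,\ldots,y_\kappa^n)$ and regard it as a single sequence over $\mathcal{Y}_1\times\cdots\times\mathcal{Y}_\kappa$.

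The event in question is a union over joint types $\mathsf{V}$ of $(x^n,y^n)$ that are consistent with two constraints. First, the $\mathcal{Y}$-marginal of $\mathsf{V}$ equals $\widehat{\mP}_{y^n}$. Second, for each $k$, the $(X,Y_k)$-marginal $\mathsf{V}_{XY_k}$ satisfies $|\mathsf{V}_{XY_k}(x,y)-\mP_X(x)\mP_{Y_k\mid X}(y\mid x)|\le\epsilon\,\mP_X(x)\mP_{Y_k\mid X}(y\mid x)$. Since the number of joint types is at most $(n+1)^{|\mathcal{X}||\mathcal{Y}_1|\cdots|\mathcal{Y}_\kappa|}$, which is polynomial in $n$, it suffices to bound the probability of a single such joint type.

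For a fixed $y^n$ and a conditional type class $T_{\mathsf{V}}(y^n)=\{x^n:\widehat{\mP}_{x^n,y^n}=\mathsf{V}\}$, standard type counting gives
\[
\Pr(X^n\in T_{\mathsf{V}}(y^n))=|T_{\mathsf{V}}(y^n)|\,2^{-n(H_{\mathsf{V}}(X)+D(\mathsf{V}_X\Vert\mP_X))}\le 2^{-n(I_{\mathsf{V}}(X;Y_1,\ldots,Y_\kappa)+D(\mathsf{V}_X\Vert \mP_X))}.
\]
Here we used $|T_{\mathsf{V}}(y^n)|\le 2^{nH_{\mathsf{V}}(X\mid Y_1\cdots Y_\kappa)}$. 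Dropping the nonnegative divergence term, the probability of each admissible type is at most $2^{-nI_{\mathsf{V}}(X;Y_1,\ldots,Y_\kappa)}$. Summing over types, the claim reduces to showing
\[
\min_{\mathsf{V}\ \text{admissible}} I_{\mathsf{V}}(X;Y_1,\ldots,Y_\kappa)\ \ge\ J(\mP_X,(\mP_{Y_k\mid X})_{k})-\delta'(\epsilon),
\]
with $\delta'(\epsilon)\to0$. The polynomial factor is then absorbed by taking $\delta(\epsilon)=\delta'(\epsilon)+o(1)$. To avoid dependence on $n$, I would first prove the bound for $n$ large. For small $n$ the bound holds trivially if $\delta$ is enlarged, or one can use the cleaner bound $(n+1)^c\le 2^{n c\log_2(n+1)/n}$. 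A more careful alternative is to define $\delta$ to also cover the $n$-dependent slack, noting that $J\le\log|\mathcal{X}|$ is bounded, so finitely many small $n$ pose no problem.

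The main obstacle is this continuity step. An admissible $\mathsf{V}$ need not have $X$-marginal exactly $\mP_X$ or $Y_k\mid X$-marginals exactly $\mP_{Y_k\mid X}$; they are only $\epsilon$-close. I would handle it by a compactness and continuity argument. Define $g(\epsilon)=\min\{I_{\mathsf{V}}(X;Y_1,\ldots,Y_\kappa):\mathsf{V}\in\mathcal{A}_\epsilon\}$, where $\mathcal{A}_\epsilon$ is the compact set of all joint pmfs (not just types) whose $(X,Y_k)$-marginals are $\epsilon$-close in the above multiplicative sense. The sets $\mathcal{A}_\epsilon$ are nested and decrease to $\mathcal{A}_0=\{\mP_X\mQ:\mQ\in\Gamma(\mP_{Y_1\mid X},\ldots,\mP_{Y_\kappa\mid X})\}$. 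Indeed, on $x$ with $\mP_X(x)=0$ the constraint forces $\mathsf{V}(x,\cdot)=0$, and the conditional there is irrelevant. Mutual information is continuous, and the minimum over nested compact sets converges to the minimum over their intersection. Hence $g(\epsilon)\uparrow g(0)=J$ as $\epsilon\to0$, and we set $\delta'(\epsilon)=J-g(\epsilon)\ge0$.

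For a more explicit bound, one could instead take an admissible $\mathsf{V}$ and construct a coupling in $\Gamma$ within total variation $O(\epsilon)$ of $\mathsf{V}$, for example by mixing with a correction term. Continuity of entropy, via Fannes-type bounds, would then give a quantitative $\delta$. The compactness argument suffices, however, since the lemma only requires $\delta(\epsilon)\to0$.
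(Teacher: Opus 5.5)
\emph{The gap: the polynomial factor.} Your reduction to single joint types and your bound on the type-class probability are correct. Your nested-compact-sets continuity argument is also correct; it is essentially the paper's Lemma~\ref{lem:Jepsilon}. The problem is the union bound over types. With at most $(n+1)^c$ types, where $c=|\mathcal X||\mathcal Y_1|\cdots|\mathcal Y_\kappa|$, you only get $p\le 2^{-n(g(\epsilon)-c\log_2(n+1)/n)}$. So the exponent loss you can certify is $J-g(\epsilon)+c\log_2(n+1)/n\ge c\log_2(n+1)/n$, which depends on $n$ and not at all on $\epsilon$.

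Writing ``$\delta(\epsilon)=\delta'(\epsilon)+o(1)$'' conflates an $o(1)$ in $n$ with a function of $\epsilon$. No $\delta$ obtained this way can tend to $0$ as $\epsilon\to0$, whether or not it may depend on $n$. Your patches do not repair this:
\begin{itemize}
\item Making the small-$n$ cases trivial means relying on $p\le 1$, which forces $\delta\ge J$ and destroys $\delta(\epsilon)\to0$.
\item The threshold $N$ beyond which $c\log_2(n+1)/n\le\eta$ grows without bound as $\eta\downarrow0$. So boundedness of $J$ does not reduce matters to finitely many $n$.
\end{itemize}
What you have proved is exactly the paper's informal sketch, with $+o(n)$ in the exponent, which the paper explicitly flags as weaker than the lemma. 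That weaker form would suffice for the achievability proof, where $\epsilon$ is fixed, rates have slack, and $n\to\infty$. It is not the statement, which holds for every $n$ with $\delta$ depending only on $\epsilon$.

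\emph{The missing idea: convexity of $\mathcal A_\epsilon$.} Use the convexity of your set $\mathcal A_\epsilon$ instead of a union bound over types. The paper does this by a change of measure. Let $\widetilde{\mP}$ be the law of $X^n$ conditioned on the event, let $T$ be a uniform time index, and set $\widetilde X=\widetilde X_T$ and $Y_k=y_{k,T}$. Then
$\log_2(1/p)=D(\widetilde{\mP}_{\widetilde X^n}\Vert \mP_X^{\otimes n})\ge\sum_i D(\widetilde{\mP}_{\widetilde X_i}\Vert\mP_X)\ge nI(\widetilde X;T)\ge nI(\widetilde X;Y_1,\ldots,Y_\kappa)$.
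The law of $(\widetilde X,Y_1,\ldots,Y_\kappa)$ is a mixture of the joint types of the conditioned sequences. Each of these lies in $\mathcal A_\epsilon$, so the mixture does too by convexity. This gives $p\le 2^{-ng(\epsilon)}$ for every $n$, with no polynomial factor.

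Alternatively, you can keep your method of types and add a tensor-power step:
\begin{itemize}
\item Concatenate $m$ copies of each $y_k^n$.
\item If every length-$n$ block of $X^{nm}\sim\mP_X^{\otimes nm}$ passes the typicality test against $y_k^n$ for all $k$, then the whole sequence passes against the concatenation. This is because the joint type is the average of the block types and $\mathcal A_\epsilon$ is convex.
\item Hence $p^m\le(nm+1)^c\,2^{-nm\,g(\epsilon)}$.
\item Taking $m$-th roots and letting $m\to\infty$ removes the polynomial factor and gives $p\le 2^{-n g(\epsilon)}$ for every $n$.
\end{itemize}
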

\begin{proof}
	See Appendix \ref{proof:typical_off_K}.
\end{proof}

For  $\kappa =1$, we have $J(\mP_X, \mP_{Y_1\mid X}) = I_{\mP_X \mP_{Y_1\mid X}}$, and Lemma \ref{lem:typical_off_K} reduces to a special case of the joint typicality lemma \cite[Sec. 2.5.1]{NIT}. To intuitively see why Lemma \ref{lem:typical_off_K} holds in general, let us provide  a proof sketch here, leaving the formal proof to Appendix \ref{proof:typical_off_K}. Fix arbitrary sequences $y_1^n, y_2^n, \ldots, y_{\kappa}^n$, and let $\mathcal{T}\triangleq \{ x^n\colon (x^n, y_k^n) \in \mathcal{T}_{\epsilon}^{(n)}(\mP_X \mP_{Y_k\mid X}), \forall k\in [\kappa]\}$. For the non-trivial case $\mathcal{T}\neq \emptyset$, we partition the sequences $x^n \in \mathcal{T}$ according to the joint type with $(y_1^n,\ldots, y_{\kappa}^n)$, i.e., $\widehat{\mP}_{x^n y_1^n \cdots y_{\kappa}^n}$, so that $x^n$ belongs to the class $\mQ_{XY_1\cdots Y_{\kappa}}$ if $\widehat{\mP}_{x^n y_1^n \cdots y_{\kappa}^n} = \mQ_{XY_1\cdots Y_{\kappa}}$.
It follows by definition that each class $\mQ_{XY_1\cdots Y_{\kappa}} \in \mathcal{Q}_{\epsilon}$, where $\mathcal{Q}_{\epsilon} \triangleq \{\mQ_{XY_1\cdots Y_{\kappa}}\colon |\mQ_{XY_k}(x,y_k) - \mP_{XY_k}(x,y_k)| \leq \epsilon \mP_{XY_k}(x,y_k), \forall k\in [\kappa]\}$. Now consider the class of $x^n$ for which the joint type is $\mQ_{XY_1\cdots Y_{\kappa}}$. The total probability of $X^n\sim \mP_{X}^{\otimes n}$ being in this class is at most $2^{-n I_{\mQ}(X;Y_1,\ldots, Y_{\kappa})}$. Since the number of possible joint types is polynomial in $n$, by summing over all joint types in $\mathcal{Q}_{\epsilon}$, we have $\Pr(X^n \in \mathcal{T}) \leq {\rm poly}(n) \max_{\mQ \in \mathcal{Q}_{\epsilon}}2^{-n I_{\mQ}(X;Y_1,\ldots, Y_{\kappa})}$, where ${\rm poly}(n)$ denotes a function whose growth is at most polynomial in $n$. Equivalently, $\Pr(X^n \in \mathcal{T}) \leq 2^{-n J_{\epsilon} + o(n)}$ where $J_{\epsilon} \triangleq \min_{\mQ \in \mathcal{Q}_{\epsilon}} I_{\mQ}(X;Y_1,\ldots, Y_{\kappa})$. By a continuity argument, one can show that $\Pr(X^n \in \mathcal{T}) \leq 2^{-n (J_0 - \delta(\epsilon)) + o(n)} = 2^{-n (J(\mP_X, (\mP_{Y_k\mid X})_{k\in [\kappa]}) - \delta(\epsilon)) + o(n)}$.  Remarkably, the formal proof is sharper than the argument sketched above, in that the extra $o(n)$ term in the exponent is shown to be unnecessary. Indeed, the bound in Lemma \ref{lem:typical_off_K} does not require $n$ to be sufficiently large; rather, it holds for \emph{all} $n$.

\subsection{Comparison with bipartite NS assistance}
Consider a restricted setting where only bipartite non-signaling resources are available.
 For a two-user broadcast channel, a bipartite non-signaling resource may be shared between the pairs of parties (Transmitter, User 1), (Transmitter, User 2), and (User 1, User 2). It is known from \cite{pereg2021quantumBC, fawzi2024broadcast} that non-signaling assistance, when established only between the two users, does not improve the capacity region. In addition, it is proved in \cite{Yao_Jafar_CSITTP} that when the transmitter has non-signaling assistance only with User $1$, the capacity region is described by the following region, referred to as the Kramer-Shamai (KS) region \cite{kramer2007capacity} with respect to User 1,
\begin{align}
  \mathcal{R}_{\KS,1}(\mN_{Y_1Y_2\mid X}) = \bigcup_{\mP_{XU} \in \mathcal{P}(\mathcal{X} \times \mathcal{U})} \left\{
  \begin{array}{l}
    (R_1,R_2)\in \mathbb{R}_{\geq 0}^{2}\colon \\
    R_1 \leq I(X;Y_1) \\
    R_2 \leq I(U;Y_2) \\
    R_1 + R_2 \leq I(X;Y_1\mid U) + I(U;Y_2)
  \end{array}
  \right.
\end{align}
where  $|\mathcal{U}|\leq |\mathcal{X}|+1$ and $U \leftrightarrow X \leftrightarrow (Y_1,Y_2)$ form a Markov chain. It corresponds to the classical capacity region when $W_2$ is provided to User $1$.
By symmetry, the capacity region when the transmitter has non-signaling assistance only with User $2$ is $\mathcal{R}_{\KS,2}(\mN_{Y_1Y_2\mid X})$, obtained by exchanging the user indices in $\mathcal{R}_{\KS,1}(\mN_{Y_1Y_2\mid X})$. A natural extension of these `\emph{bipartite-NS}'-assisted capacity regions is the following (KS+) region
\begin{align}
	\mathcal{R}_{\KSp} \triangleq \conv\big(\mathcal{R}_{\KS,1} \cup \mathcal{R}_{\KS,2} \big)
\end{align}
where $\conv(\cdot)$ denotes the convex hull. 

It is shown by \cite{Yao_Jafar_CSITTP} that for some special classes of two-user BCs, $\mathcal{R}_{\KSp}$ matches $\mathcal{R}_{\Sato}$ so they are both equal to $ \mathcal{C}^{\NS}$. Such examples include the Erasure Blackwell channel $\mN_{\BLEC}$ \cite{gohari2021outer}, for which $\mathcal{R}_{\KS,1}(\mN_{\BLEC}) = \mathcal{R}_{\Sato}(\mN_{\BLEC})$.
Since in general $\mathcal{R}_{\KSp} \subseteq \mathcal{C}^{\NS}$, one might ask if the inclusion can be strict, i.e., whether for some channels we have $\mathcal{R}_{\KSp} \subset \mathcal{C}^{\NS}$. The following corollary (Corollary \ref{cor:BSSC}) confirms the existence of such channels. The example used in Corollary \ref{cor:BSSC} is the binary skew-symmetric channel (BSSC) $\mN_{\BSSC}$. The channel alphabets are $\mathcal{X}=\mathcal{Y}_1=\mathcal{Y}_2=\{0,1\}$. The channel is defined as follows: if $X=0$, then $Y_1=0$ and $Y_2$ is uniformly distributed over $\{0,1\}$; if $X=1$, then $Y_2=1$ and $Y_1$ is uniformly distributed over $\{0,1\}$.
\begin{corollary} \label{cor:BSSC}
	For the two-user DM-BC represented by the binary skew-symmetric channel (BSSC) $\mN_{\BSSC}$, we have,
	\begin{align}
	\max_{(R_1,R_2)\in \mathcal{C}^{\NS}(\mN_{\BSSC})} (R_1+R_2) &\geq 0.5,\\ 
	\max_{(R_1,R_2)\in \mathcal{R}_{\KSp}(\mN_{\BSSC})} (R_1+R_2) &< 0.45.
	\end{align}
	Thus,  $\mathcal{R}_{\KSp}(\mN_{\BSSC}) \subset \mathcal{C}^{\NS}(\mN_{\BSSC})$, i.e., the inclusion is strict.
\end{corollary}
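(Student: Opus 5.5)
The plan is to prove the two inequalities separately: the lower bound through Theorem~\ref{thm:NSBC_Kuser} with an explicit input distribution, and the upper bound by reducing the convex hull to a single Kramer--Shamai region and then to a one-dimensional optimization. Strictness of the inclusion follows immediately once both bounds hold, since $\mathcal{R}_{\KSp}\subseteq\mathcal{C}^{\NS}$ and the maximal sum rates differ.

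\emph{Lower bound.} By Theorem~\ref{thm:NSBC_Kuser} it suffices to exhibit a point of $\mathcal{R}_{\Sato}(\mN_{\BSSC})$ with sum rate $0.5$. Take $\mP_X$ uniform on $\{0,1\}$. Then $\Pr(Y_1=0)=3/4$ and $H(Y_1\mid X)=1/2$, so $I(X;Y_1)=h(1/4)-1/2\approx 0.311$; by symmetry $I(X;Y_2)\approx 0.311$ as well.

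For the sum-rate bound, note that the set $\Gamma(\mN_{Y_1\mid X},\mN_{Y_2\mid X})$ is a singleton. For each $x$, one of the two marginals is a point mass: $Y_1=0$ when $X=0$, and $Y_2=1$ when $X=1$. Hence the coupling is forced, and $J$ is simply $I(X;Y_1,Y_2)$ under this coupling.

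The joint law of $(Y_1,Y_2)$ puts mass $1/4,1/2,1/4$ on $(0,0),(0,1),(1,1)$, so $H(Y_1,Y_2)=1.5$. Also $H(Y_1,Y_2\mid X)=1$, giving $J=0.5$. Therefore $(R_1,R_2)=(0.25,0.25)$ satisfies all three Sato constraints ($0.25\le 0.311$ for each user and $0.5\le 0.5$ for the sum), which gives $\max(R_1+R_2)\ge 0.5$.

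\emph{Upper bound, reduction to one region.} Since $R_1+R_2$ is linear, its maximum over $\conv(\mathcal{R}_{\KS,1}\cup\mathcal{R}_{\KS,2})$ equals the larger of its maxima over $\mathcal{R}_{\KS,1}$ and $\mathcal{R}_{\KS,2}$. Relabel $x\mapsto 1-x$, $y\mapsto 1-y$, and swap the two users; this maps $\mN_{\BSSC}$ to itself and exchanges $\mathcal{R}_{\KS,1}$ with $\mathcal{R}_{\KS,2}$. So it suffices to bound $\max_{\mP_{XU}}\big[I(X;Y_1\mid U)+I(U;Y_2)\big]$.

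Write $q_u=\Pr(X=1\mid U=u)$, $\bar q=\sum_u \mP_U(u) q_u$, and let $h$ be the binary entropy function. Given $U=u$:
\begin{itemize}
\item $Y_1=1$ with probability $q_u/2$, so $I(X;Y_1\mid U=u)=h(q_u/2)-q_u$;
\item $Y_2=1$ with probability $(1+q_u)/2$.
\end{itemize}
Hence the objective equals
\begin{align}
h\big(\tfrac{1+\bar q}{2}\big)+\sum_u \mP_U(u)\, f(q_u), \qquad f(q)\triangleq h(q/2)-q-h\big(\tfrac{1+q}{2}\big).
\end{align}
Its supremum is $\max_{\bar q\in[0,1]}\big[h(\tfrac{1+\bar q}{2})+\breve f(\bar q)\big]$, where $\breve f$ is the upper concave envelope of $f$ on $[0,1]$. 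Two-point mixtures suffice to attain $\breve f$.

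\emph{Main obstacle.} The difficulty is making the numerical claim $<0.45$ rigorous rather than merely plotted. The first step is to locate $\breve f$ numerically: find where $f$ is convex and identify any bitangent line segment. Next, certify an affine majorant $f(q)\le a+bq$ on $[0,1]$, or a piecewise-affine one built from a few lines. This can be checked via a grid together with an explicit Lipschitz or derivative bound on $f$, handling the endpoint $q=0$ separately, where $h(q/2)$ has unbounded slope. Any such majorant satisfies $\breve f\le a+bq$. Finally, bound the concave one-variable function $h(\tfrac{1+\bar q}{2})+a+b\bar q$ by calculus; its maximizer has a closed form. If a single line is too loose to reach $0.45$, the fallback is to split $[0,1]$ into a few subintervals and use a separate affine majorant on each.
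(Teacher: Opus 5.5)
Your proposal is correct, and through the parametrization it follows the paper's route. For the lower bound, both use the uniform input with the forced coupling. For the upper bound, both use the linear-functional argument on the convex hull, the user-swapping symmetry (which you make explicit), and the same single-letter objective.

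Indeed, your $h(\tfrac{1+\bar q}{2})+\sum_u p_u f(q_u)$ is algebraically identical to the paper's $\sum_u p_u g(q_u)+\big(h(\tfrac{1-\bar q}{2})-(1-\bar q)\big)$. Here the paper's $g(q)=h(\tfrac q2)-h(\tfrac{1-q}{2})+1-2q$ equals your $f(q)+1-q$.

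The only difference is the final certification. You propose locating the concave envelope $\breve f$ and searching for affine (possibly piecewise) majorants checked on a grid. The paper instead bounds the two pieces separately by their maxima.

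In your language, the paper's split is exactly the affine majorant of slope $b=1$, namely $f(q)\le(\max g-1)+q$. After this substitution the leftover concave function $h(\tfrac{1+\bar q}{2})+\bar q-1$ is the Z-channel mutual information. It is maximized at $\bar q=3/5$ with value $\log_2(5/4)\approx 0.322$.

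Moreover, $\max g$ is a clean calculus problem. Its derivative is $g'(q)=\tfrac12\log_2\tfrac{(2-q)(1+q)}{q(1-q)}-2$, which vanishes only at the roots of $15q^2-15q+2=0$. Comparing with $g(0)=g(1)=0$ gives $\max g=g\big(\tfrac{15-\sqrt{105}}{30}\big)\approx 0.101$. This also substantiates the bound $g<1/9$, which the paper only asserts.

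Hence a single line already gives about $0.423<0.45$. The grid, the Lipschitz control near $q=0$, and the piecewise fallback you anticipate are unnecessary.

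What your envelope formulation buys is exactness. With the true $\breve f$ it yields the exact maximal Kramer--Shamai sum rate, and it would still work for channels where decoupling is too loose. The paper's split is cruder, but it reduces everything to two scalar maximizations with closed-form critical points.
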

\noindent We provide the proof in Appendix \ref{proof:BSSC}.

\section{Achievability of Theorem \ref{thm:NSBC_Kuser}: Two-user case} \label{proof:NSBC_capacity}
In the following, we write $I_{\mP_X\mN_{Y_k\mid X}}(X;Y_k)$ simply as $I(X;Y_k)$ for each $k\in \{1,2\}$, and write $J(\mP_X, (\mN_{Y_1\mid X}, \mN_{Y_2\mid X}))$ as $J$.
We shall prove that, for each $\mP_X\in \mathcal{P}(\mathcal{X})$, $(R_1,R_2)\in \mathbb{R}_{> 0}^2$ is achievable if
\begin{align}
	R_1 &< I  (X;Y_1)\\
	R_2 &< I  (X;Y_2)\\
	R_1+R_2 &< J 
\end{align}
whenever $\min_{k\in \{1,2\}} \{I  (X;Y_k)\} > 0$. Note that if, for example, $I(X;Y_1)=0$, then the achievability of any $R_2<I(X;Y_2)$ is implied by the degenerate single-user solution with User $2$.

\subsection{NS-assisted coding schemes for $K=2$}
An $(M_1,M_2,n)$ NS-assisted coding scheme over $n$ uses of the channel is specified by a conditional pmf $\mZ \in \mathcal{P}(\mathcal{X}^n \times [M_1]\times [M_2] \mid [M_1]\times [M_2] \times \mathcal{Y}_1^n \times \mathcal{Y}_2^n)$, for which the following set of conditions (tripartite non-signaling conditions) hold.
\begin{enumerate}[label=\textit{C\arabic*}:, start=0]
	\item $\sum_{x^n}\mZ(x^n,\widehat{w}_1, \widehat{w}_2\mid (w_1,w_2), y_1^n, y_2^n)$ is invariant under changes of $(w_1,w_2)\in [M_1]\times [M_2]$;
	\item $\sum_{\widehat{w}_1}\mZ(x^n,\widehat{w}_1, \widehat{w}_2\mid (w_1,w_2), y_1^n, y_2^n)$ is invariant under changes of $y_1^n\in \mathcal{Y}_1^n$;
	\item $\sum_{\widehat{w}_2}\mZ(x^n,\widehat{w}_1, \widehat{w}_2\mid (w_1,w_2), y_1^n, y_2^n)$ is invariant under changes of $y_2^n\in \mathcal{Y}_2^n$.
\end{enumerate}

\subsection{Towards an authentication solution: Key properties}
An NS-assisted coding scheme $\mZ(x^n, \widehat{w}_1, \widehat{w}_2\mid (w_1,w_2), y_1^n, y_2^n)$ admits the following factorization
\begin{align}
	&\mZ(x^n, \widehat{w}_1, \widehat{w}_2\mid (w_1,w_2), y_1^n, y_2^n)\notag\\
	&=\mZ(x^n \mid (w_1,w_2)) \mZ(\widehat{w}_1, \widehat{w}_2 \mid (w_1,w_2), x^n, y_1^n, y_2^n).
\end{align}
Observe that the first factor does not depend on $(y_1^n,y_2^n)$ due to the NS conditions \textit{C1} and \textit{C2}.

We will focus on a subclass of NS-assisted coding schemes, referred to as the \emph{authentication solution}, which has the following properties.
\begin{enumerate}[label=P\arabic*:]
	\item At the transmitter's side, the scheme generates $X^n$ according to a distribution that does not depend on $(w_1,w_2)$. Formally, 
	\begin{align}\mZ(x^n \mid (w_1,w_2)) = \mZ(x^n)
	\end{align}
	 for some distribution $\mZ(x^n)$ that is invariant under changes of $(w_1,w_2)$.
	\item At the receivers' side, according to some distribution that only depends on $(x^n,y_1^n,y_2^n)$, for each user, the scheme either outputs the correct message, or an incorrect message chosen uniformly at random for that user. Specifically, for $k\in \{1,2\}$, let $a_k$ be the probability of the event that ``the scheme outputs the correct message for User $k$." Similarly, let $a_{12}$ be the probability of the event that ``the scheme outputs the correct messages for both users." Formally, 
	\begin{align}
		&\mZ(\widehat{w}_1, \widehat{w}_2 \mid (w_1,w_2), x^n, y_1^n, y_2^n) \notag\\
		&=\begin{cases}
			 a_{12}, & \mbox{if}~ \widehat{w}_1=w_1, \widehat{w}_2=w_2 \\
			 \frac{a_1-a_{12}}{M_2-1}, & \mbox{if}~ \widehat{w}_1=w_1, \widehat{w}_2\neq w_2\\
			 \frac{a_2-a_{12}}{M_1-1}, & \mbox{if}~ \widehat{w}_1\neq w_1, \widehat{w}_2=w_2\\
			 \frac{1-a_1-a_2+a_{12}}{(M_1-1)(M_2-1)}, & \mbox{if}~ \widehat{w}_1\neq w_1, \widehat{w}_2 \neq w_2
		\end{cases}.
	\end{align}
	The values $a_1,a_2,a_{12}$ depend on the parameters $(x^n,y_1^n,y_2^n)$, but the dependence is not explicitly indicated for the sake of compact notation. Observe that $a_{1}-a_{12}$ equals the probability of the event that ``the scheme outputs the correct message for User 1 and an incorrect message for User 2." Similarly, $a_{2}-a_{12}$ equals the probability of the event that ``the scheme outputs the correct message for User 2 and an incorrect message for User 1," and $1-a_{1}-a_{2}+a_{12}$ equals the probability of the event that ``the scheme outputs incorrect messages for both users." Since a probability value must be in $[0,1]$, $a_1,a_2,a_{12}$ must satisfy the condition
	\begin{align} \max\{0, a_1+a_2-1\}\leq a_{12}\leq \min \{a_1,a_2\}.\label{eq:validprob}
	\end{align}
	Conversely, any choice satisfying Condition \eqref{eq:validprob} yields valid probability values.
	Now consider the NS conditions. For the NS condition \textit{C2} to hold, it suffices that $a_1$ must be invariant under changes of $y_2^n$. In other words, the probability of outputting the correct message for User $1$ must not depend on the input at User $2$. We have,
\begin{align}
	&\sum_{\widehat{w}_2} \mZ(x^n, \widehat{w}_1, \widehat{w}_2\mid (w_1,w_2), y_1^n, y_2^n)\notag \\
	&=\mZ(x^n) \sum_{\widehat{w}_2} \mZ(\widehat{w}_1, \widehat{w}_2 \mid (w_1,w_2), x^n, y_1^n, y_2^n)  \\
	&= \mZ(x^n) \begin{cases}
		a_1, & \mbox{if}~ \widehat{w}_1=w_1\\
		\frac{1-a_1}{M_1-1}, & \mbox{if}~ \widehat{w}_1\neq w_1.
	\end{cases}
\end{align}
Thus, \textit{C2} holds if $a_1$ depends only on $(x^n,y_1^n)$.
Similarly, \textit{C1}  is satisfied if $a_2$ depends only on $(x^n,y_2^n)$.
\end{enumerate}

In addition, for \textit{C0} to hold, it suffices to have 
\begin{enumerate}[label=P\arabic*:, start=3]
	\item \begin{align}
	&\mathbb{E}[a_{1}(X^n, y_1^n)] = \frac{1}{M_1}, \label{eq:P3_1}\\
	&\mathbb{E}[a_{2}(X^n, y_2^n)] = \frac{1}{M_2}, \label{eq:P3_2}\\
	&\mathbb{E}[a_{12}(X^n, y_1^n, y_2^n)] = \frac{1}{M_1M_2}, \label{eq:P3_12}
\end{align}
where the expectations are defined with respect to $X^n \sim \mZ(x^n)$. 
To see this, note that, by tracing out $x^n$,
\begin{align}
	&\sum_{x^n}\mZ(x^n, \widehat{w}_1, \widehat{w}_2\mid (w_1,w_2), y_1^n, y_2^n)\notag \\
	&=\begin{cases}
			\mathbb{E}[a_{12}(X^n, y_1^n, y_2^n)], & \mbox{if}~ \widehat{w}_1=w_1, \widehat{w}_2=w_2 \\
			 \frac{\mathbb{E}[a_1(X^n,y_1^n)]-\mathbb{E}[a_{12}(X^n, y_1^n, y_2^n)]}{M_2-1}, & \mbox{if}~ \widehat{w}_1=w_1, \widehat{w}_2\neq w_2\\
			 \frac{\mathbb{E}[a_2(X^n,y_2^n)]-\mathbb{E}[a_{12}(X^n, y_1^n, y_2^n)]}{M_1-1}, & \mbox{if}~ \widehat{w}_1\neq w_1, \widehat{w}_2=w_2\\
			 \frac{1-\mathbb{E}[a_1(X^n,y_1^n)]-\mathbb{E}[a_2(X^n,y_2^n)]+\mathbb{E}[a_{12}(X^n,y_1^n,y_2^n)]}{(M_1-1)(M_2-1)}, & \mbox{if}~ \widehat{w}_1\neq w_1, \widehat{w}_2 \neq w_2
		\end{cases}\\
	&= \frac{1}{M_1M_2}
\end{align}
which is invariant under changes of $(w_1,w_2)$.
\end{enumerate}

\begin{remark}[Twirling]
	Despite the apparent specialization to enforce the properties P1--P3, there is no loss of generality in terms of the optimal probability of error by considering only the authentication solutions.  This can be shown by a twirling argument similar to the one used in \cite{cubitt2011zero}. Specifically, let $\mZ$ be any NS-assisted coding scheme. Define 
	\begin{align}
	&\mZ'(x^n,\widehat{w}_1, \widehat{w}_2\mid (w_1,w_2), y_1^n, y_2^n) \notag\\
	&= \frac{1}{|\mathcal{S}_{M_1}||\mathcal{S}_{M_2}|}\sum_{\pi_1\in \mathcal{S}_{M_1}, \pi_2 \in \mathcal{S}_{M_2}} \mZ(x^n, \pi_1(\widehat{w}_1), \pi_2(\widehat{w}_2) \mid (\pi_1(w_1), \pi_2(w_2)), y_1^n, y_2^n),
	\end{align} where $\mathcal{S}_m$ denotes the set of all permutations of the set $[m]$. It can be verified that $\mZ'$ is also a NS-assisted coding scheme, and that $\mZ'$ satisfies P1--P3 due to the symmetry in $\mZ'$. Moreover, $\mZ'$ has the same  probability of error as $\mZ$, i.e., $P_{e,k}(\mZ) = P_{e,k}(\mZ'), \forall k\in \{1,2\}$.
\end{remark}

\subsection{Authentication solution}
Recall that we have fixed $\mP_X$. Let $\epsilon \in (0,1)$.
Define the following probabilities,
\begin{align}
	& t_1(y_1^n) \triangleq \Pr\big( (X^n,y_1^n) \in \mathcal{T}^{(n)}_{\epsilon}(\mP_X \mN_{Y_1\mid X}) \big)\\
	& t_2(y_2^n) \triangleq \Pr\big( (X^n,y_2^n) \in \mathcal{T}^{(n)}_{\epsilon}(\mP_X \mN_{Y_2\mid X}) \big)\\
	&t_{12}(y_1^n,y_2^n) \triangleq \Pr\Big( (X^n,y_1^n) \in \mathcal{T}^{(n)}_{\epsilon}(\mP_X \mN_{Y_1\mid X}) \notag \\
	& \hspace{2.5cm}\mbox{and}~ (X^n,y_2^n) \in \mathcal{T}^{(n)}_{\epsilon}(\mP_X \mN_{Y_2\mid X}) \Big)
\end{align}
where $X^n \sim \mP_{X}^{\otimes n}$.

We next specify the authentication solution. Firstly, let
\begin{align}
	\mZ(x^n) \triangleq \mP_X^{\otimes n}(x^n).
\end{align}
This ensures that P1 is satisfied.

Then, for each $(x^n, y_1^n)$, let
\begin{align}\label{eq:choosea1}
	a_{1}(x^n,y_1^n) \triangleq \begin{cases}
		1, & \mbox{if}~ (x^n,y_1^n) \in \mathcal{T}^{(n)}_{\epsilon}(\mP_X \mN_{Y_1\mid X}) \\
		c_1(y_1^n), & \mbox{otherwise}
	\end{cases}.
\end{align}
Intuitively, $a_1$ is a `typicality authenticator,' which outputs $1$ when (albeit not necessarily \emph{only} when) the pair $(x^n,y_1^n)$ is jointly typical. More importantly, observe that $a_1$ does not depend on $y_2^n$, which ensures that the NS-condition \textit{C2} is satisfied. The weights $0\leq c_1(y_1^n)\leq 1$ will be chosen later to satisfy \eqref{eq:P3_1}.

Similarly, let
\begin{align}\label{eq:choosea2}
	a_{2}(x^n,y_2^n) \triangleq \begin{cases}
		1, & \mbox{if}~ (x^n,y_2^n) \in \mathcal{T}^{(n)}_{\epsilon}(\mP_X \mN_{Y_2\mid X}) \\
		c_2(y_2^n), & \mbox{otherwise}
	\end{cases},
\end{align}
Observe that $a_2$ does not depend on $y_1^n$, which ensures that the NS-condition \textit{C1} is satisfied.
The weights $0\leq c_2(y^n) \leq 1$ will be chosen to satisfy \eqref{eq:P3_2}.

Moreover, let
\begin{align} \label{eq:def_a12}
	a_{12}(x^n,y_1^n,y_2^n) \triangleq \begin{cases}
		1, & \mbox{if}~ (x^n,y_k^n) \in \mathcal{T}^{(n)}_{\epsilon}(\mP_X \mN_{Y_k\mid X}), ~~\forall k\in \{1,2\} \\
		c_1(y_1^n), & \mbox{if} (x^n, y_1^n) \not\in \mathcal{T}^{(n)}_{\epsilon}(\mP_X \mN_{Y_1\mid X}), (x^n, y_2^n)\in \mathcal{T}^{(n)}_{\epsilon}(\mP_X \mN_{Y_2\mid X})\\
		c_2(y_2^n), & \mbox{if} (x^n, y_1^n) \in \mathcal{T}^{(n)}_{\epsilon}(\mP_X \mN_{Y_1\mid X}), (x^n, y_2^n)\not\in \mathcal{T}^{(n)}_{\epsilon}(\mP_X \mN_{Y_2\mid X})\\
		c_{12}(y_1^n, y_2^n), & \mbox{otherwise}
	\end{cases},
\end{align}
and the weights $0\leq c_{12}(y_1^n,y_2^n)\leq 1$ will be chosen to satisfy \eqref{eq:P3_12}.

Next we specify these weights to directly satisfy \eqref{eq:P3_1}, \eqref{eq:P3_2}, \eqref{eq:P3_12} and thus satisfy P3.
Note that \eqref{eq:P3_1} and \eqref{eq:P3_2} require that, for each $k\in \{1,2\}$, $\frac{1}{M_k}=\mathbb{E}[a_k(X^n, y_k^n)]= t_k(y_k^n) + (1-t_k(y_k^n))c_k(y_k^n)$. We therefore let, for each $k\in \{1,2\}$,
\begin{align}
	& c_k(y_k^n) = \frac{1/M_k - t_k(y_k^n)}{1-t_k(y_k^n)}.
\end{align}
In addition, \eqref{eq:P3_12} requires that,
\begin{align}
	&\frac{1}{M_1M_2} = \mathbb{E}[a_{12}(X^n,y_1^n,y_2^n)]\notag \\
	&=t_{12} + (t_2-t_{12})c_1 + (t_1-t_{12})c_2 + (1-t_1-t_2+t_{12})c_{12}.
\end{align}
For this we let
\begin{align}
	c_{12} = \frac{\frac{1}{M_1M_2}-t_{12}-(t_2-t_{12})c_1-(t_1-t_{12})c_2}{1-t_1-t_2+t_{12}},
\end{align}
where the parameters $y_1^n$, $y_2^n$ are suppressed, as they are clear from the context. Observe that with these choices, P3 is satisfied.

\subsection{Validity analysis}
Let us check the validity of the solution. Since P1 and P3 are satisfied directly by the construction, and the specification of $a_1$ and $a_2$ in \eqref{eq:choosea1}, \eqref{eq:choosea2} satisfies NS-conditions \textit{C2, C1},  it remains only to verify that the probability values are valid, i.e., \eqref{eq:validprob} is satisfied.

Lemma \ref{lem:typical_off_K} (setting $\kappa=1$) implies that, for each $k\in \{1,2\}$,
\begin{align}
	t_k(y_k^n) \leq 2^{-n(I (X;Y_k)-\delta_k(\epsilon))}
\end{align}
for every $y_k^n \in \mathcal{Y}_k^n$, where $\delta_k(\epsilon)>0$ and $\lim_{\epsilon \to 0+}~ \delta_k(\epsilon) = 0$. 

Moreover, Lemma \ref{lem:typical_off_K} (setting $\kappa=2$) implies that
\begin{align}
	t_{12}(y_1^n,y_2^n) \leq 2^{-n(J -\delta_{12}({\epsilon}))}
\end{align}
for every $(y_1^n,y_2^n)\in \mathcal{Y}_1^n\times \mathcal{Y}_2^n$, where $\lim_{\epsilon\to 0+}~ \delta_{12}(\epsilon) = 0$. 

From now on, restrict $\epsilon>0$ to be sufficiently small so that 
\begin{align}
I(X;Y_k) -\delta_k(\epsilon) &>0, \quad\forall k\in \{1,2\}\label{eq:epsilonsmalla}\\
\mbox{and } J -\delta_{12}(\epsilon) &>0.\label{eq:epsilonsmallb}
\end{align}
For any $(R_1,R_2)\in \mathbb{R}_{> 0}^2$ such that
\begin{equation}\label{eq:achi_rates}
\begin{aligned}
	R_1 &< I(X;Y_1) - \delta_1(\epsilon)\\
	R_2 &< I(X;Y_2) - \delta_2(\epsilon)\\
	R_1+R_2 &< J - \delta_{12}(\epsilon)
\end{aligned}
\end{equation}
by picking
\begin{align}
	&M_1^{(n)} \triangleq \lfloor 2^{nR_1} \rfloor, \qquad M_2^{(n)} \triangleq \lfloor 2^{nR_2} \rfloor,
\end{align}
it is guaranteed that, as $n\to \infty$,
\begin{align}
	&\frac{\log_2 (M_k^{(n)})}{n} \to R_k,\quad\forall k\in \{1,2\},
\end{align}
and that for every $(y_1^n,y_2^n)$,
\begin{align}
	&t_1(y_1^n) \to 0, ~~~~~~t_2(y_2^n) \to 0, ~~~~~~ t_{12}(y_1^n,y_2^n) \to 0\\
	&\frac{t_1(y_1^n)}{1/M_1^{(n)}} \to 0, ~~ \frac{t_2(y_2^n)}{1/M_2^{(n)}} \to 0, ~~ \frac{t_{12}(y_1^n,y_2^n)}{1/(M_1^{(n)}M_2^{(n)})} \to 0 \\
	&\frac{c_1(y_1^n)}{1/M_1^{(n)}} \to 1, ~~ \frac{c_2(y_2^n)}{1/M_2^{(n)}} \to 1, ~~ \frac{c_{12}(y_1^n,y_2^n)}{1/(M_1^{(n)}M_2^{(n)})} \to 1 \label{eq:limit_3}
\end{align}
because
\begin{align}
	&t_1(y_1^n) \leq 2^{-n(I(X;Y_1) -\delta_1(\epsilon))} = 2^{-n(R_1+\delta_1')} < 2^{-nR_1} \leq 1/M_1^{(n)}\\
	&t_2(y_2^n) \leq 2^{-n(I(X;Y_2) -\delta_2(\epsilon))} = 2^{-n(R_2+\delta_2')} < 2^{-nR_2} \leq 1/M_2^{(n)}\\
	&t_{12}(y_1^n,y_2^{n}) \leq 2^{-n(J -\delta_{12}(\epsilon))} = 2^{-n(R_1+R_2+\delta_{12}')} < 2^{-n(R_1+R_2)} \leq 1/(M_1^{(n)}M_2^{(n)})
\end{align}
where $\delta_1', \delta_2', \delta_{12}'>0$.
Since both $1/M_1^{(n)}$ and $1/M_2^{(n)}$ are positive sequences that tend to $0$, it follows from \eqref{eq:limit_3} that, for sufficiently large $n$, 
\begin{align}
	&0\leq c_{12}(y_1^n,y_2^n) \leq \min \{ c_1(y_1^n), c_2(y_2^n)\}  \ll 1\label{eq:smallclargen}
\end{align}
for every $(y_1^n,y_2^n)$.
It further follows from \eqref{eq:def_a12} that for every $(x^n,y_1^n,y_2^n)$,
\begin{align}
	\max\{a_1(x^n,y_1^n)+a_2(x^n,y_2^n)-1,0\} \leq  a_{12}(x^n,y_1^n,y_2^n) \leq \min\{a_1(x^n,y_1^n), a_2(x^n,y_2^n)\}.\label{eq:checkvalidprob}
\end{align}
Thus, \eqref{eq:validprob} is satisfied. To see this explicitly, for each $k\in \{1,2\}$, let $I_k(x^n,y_k^n)$ be the indicator of the predicate $(x^n,y_k^n) \in \mathcal{T}^{(n)}_{\epsilon}(\mP_X \mN_{Y_k\mid X})$, and let $\overline{I}_k(x^n,y_k^n)$ be the indicator of its negation. Then, for every $(x^n,y_1^n,y_2^n)$,
\begin{align}
	\begin{bmatrix}
		a_1 \\ a_2 \\ a_{12}
	\end{bmatrix}
	=
	\begin{bmatrix}
		1 \\ 1 \\ 1
	\end{bmatrix} I_1 I_2
	+
	\begin{bmatrix}
		c_1 \\ 1 \\ c_1
	\end{bmatrix} \overline{I}_1 I_2
	+
	\begin{bmatrix}
		1 \\ c_2 \\ c_2
	\end{bmatrix} I_1 \overline{I}_2
	+
	\begin{bmatrix}
		c_1 \\ c_2 \\ c_{12}
	\end{bmatrix} \overline{I}_1 \overline{I}_2.
\end{align}
In particular, note that since on the RHS, exactly one of $I_1I_2, \overline{I}_1 I_2, I_1\overline{I}_2, \overline{I}_1 \overline{I}_2$ is always equal to one, while the rest are zero, the vector $[a_1,a_2,a_{12}]^T$  always evaluates to exactly one of the four columns on the RHS. The first three realizations trivially satisfy \eqref{eq:checkvalidprob}, while the last column satisfies \eqref{eq:checkvalidprob} for sufficiently large $n$ due to \eqref{eq:smallclargen}.
Therefore, for every sufficiently small $\epsilon$, the above construction yields valid NS-assisted coding schemes $\mZ_{n}$ for all sufficiently large $n$.
 
\subsection{Reliability analysis}
We now analyze the probability of error of the scheme described above.
The transmitter provides the scheme with $(W_1,W_2)$. The scheme generates $X^n$ (a random seed) according to $\mP_X^{\otimes n}$. The seed is used as the input to the broadcast channel, and for each $k\in \{1,2\}$, User $k$ receives $Y_k^n$ (a key), which comes from the stochastic mapping $\mN_{Y_k\mid X}^{\otimes n}$ of the seed. The key is then used as the input to the scheme at User $k$. The law of large numbers then implies
\begin{align}
	\lim_{n\to \infty}\Pr\big( (X^n,Y_k^n) \in \mathcal{T}^{(n)}_{\epsilon}(\mP_X\mN_{Y_k\mid X}), \forall k\in \{1,2\} \big) = 1.
\end{align}
It follows that with probability $\to 1$,
\begin{align}
	a_{12}(X^n,Y_1^n,Y_2^n) = 1.
\end{align}
Therefore, with probability $\to 1$, the scheme provides the correct messages for both users. It follows that $\lim_{n\to \infty} P_{e,k}(\mZ_n)=0, \forall k\in \{1,2\}$.
The rate tuples $(R_1,R_2)$ thus achieved are exactly described by Eq. \eqref{eq:achi_rates}. Taking $\epsilon\to 0$ concludes the proof. 
\hfil \qed

\section{Conclusion}
A precise characterization of the capacity region of the $K$-user BC with NS-assistance may be surprising, considering that the classical capacity region is still among the most prominent long-standing open problems in network information theory. While it is natural to imagine that the new insights from the NS-assisted capacity region could be useful to make progress in the classical case, no direct connection is immediately apparent to us. The result does impart new operational meaning to Sato's region. For example, Tian and Shamai previously noted in \cite{Tian_Shamai_Sato} that the synergistic information in a partial information decomposition of \cite{Partial_Information_Decomposition} can be rigorously interpreted as a \emph{lower bound} on the cooperative gain of a corresponding broadcast channel, subject to a specified input distribution. The cooperative gain was interpreted through Sato's bound, which we now know to be a tight characterization of the NS-assisted capacity. Thus, the updated insight is that  the synergistic information in the partial information decomposition of \cite{Partial_Information_Decomposition} can be rigorously interpreted as \emph{exactly} the cooperative gain in the NS-assisted capacity of a corresponding broadcast channel, subject to a given input distribution.

\appendix
\section{Proof of Lemma \ref{lem:typical_off_K}} \label{proof:typical_off_K}
We are given $\epsilon\in (0,1), n\in \mathbb{N}, \mP_X\in \mathcal{P}(\mathcal{X})$ and for each $k\in [\kappa]$, $ \mP_{Y_k\mid X}\in \mathcal{P}(\mathcal{Y}_k\mid \mathcal{X}), y_k^n\in \mathcal{Y}_k^n$. Let $\mP_{XY_k} \triangleq \mP_X\mP_{Y_k\mid X}, \forall k\in [\kappa]$.
We recall the definitions of types and typical sets in \eqref{eq:def_type} and \eqref{eq:def_typical_set}. Let
\begin{align}
	&\widetilde{\mathcal{T}} \triangleq \big\{ x^n\colon (x^n,y_k^n) \in \mathcal{T}_{\epsilon}^{(n)}(\mP_{XY_k}), \forall k\in [\kappa]\}.
\end{align}
Write $p \triangleq \sum_{x^n \in \widetilde{\mathcal{T}}}\mP_{X}^{\otimes n}(x^n) = \Pr(X^n \in \widetilde{\mathcal{T}})$ where $X^n \sim \mP_{X}^{\otimes n}$. We shall prove that 
\begin{align*}
	p \leq 2^{-n(J(\mP_X, (\mP_{Y_k\mid X})_{k\in [\kappa]})-\delta(\epsilon))}.
\end{align*}
The statement is trivially true if $p=0$, so consider $p>0$.

We proceed with a change of measure argument.
Let $\widetilde{X}^n \sim \widetilde{\mP}_{\widetilde{X}^n}\in \mathcal{P}(\mathcal{X}^n)$ such that 
\begin{align}
	\widetilde{\mP}_{\widetilde{X}^n}(x^n) \triangleq 
	\begin{cases}
		\mP_{X}^{\otimes n}(x^n)/p, & x^n \in \widetilde{\mathcal{T}}\\
		0, & x^n \in \mathcal{X}^n \setminus  \widetilde{\mathcal{T}}
	\end{cases}.
\end{align}
In other words, the distribution 
\begin{align}
	\widetilde{\mP}_{\widetilde{X}^n}(x^n) = \Pr(X^n=x^n \mid X^n \in \widetilde{\mathcal{T}})
\end{align}
is equal to the conditional distribution of $X^n$ given that $X^n \in \widetilde{\mathcal{T}}$.
It follows that $p$ is related to a relative entropy,
\begin{align} \label{eq:change_of_measure_bound}
	D(\widetilde{\mP}_{\widetilde{X}^n} \Vert \mP_X^{\otimes n}) &= \sum_{x^n \in \widetilde{\mathcal{T}}} \widetilde{\mP}_{\widetilde{X}^n}(x^n) \log_2 \frac{\widetilde{\mP}_{\widetilde{X}^n}(x^n)}{\mP_{X}^{\otimes n}(x^n)} =\log_2\left(\frac{1}{p}\right).
\end{align}

Recall that in our notations,
\begin{align}
	\widetilde{X}^n = (\widetilde{X}_1,\ldots, \widetilde{X}_n), ~y_k^n = (y_{k,1}, \ldots, y_{k,n}).
\end{align}
Define random variables
\begin{align}
	\widetilde{X} \triangleq \widetilde{X}_T, ~ Y_k\triangleq y_{k,T}, \forall k\in [\kappa],
\end{align}
where $T$ is a time-sharing variable, uniformly distributed over $[n]$, and is independent of $\widetilde{X}^n$. The random variables $(\widetilde{X}^n,T,\widetilde{X},Y_1,\ldots, Y_{\kappa})$ are jointly distributed and their joint pmf is denoted by $\widetilde{\mP}_{\widetilde{X}^nT\widetilde{X}Y_1\cdots Y_{\kappa}}$.

Following \eqref{eq:change_of_measure_bound},
\begin{align}
	&\log_2\left(\frac{1}{p}\right)\notag \\
	&=D(\widetilde{\mP}_{\widetilde{X}^n} \Vert \mP_X^{\otimes n})  \\
	&\geq \sum_{i=1}^n D(\widetilde{\mP}_{\widetilde{X}_i}\Vert \mP_X)\label{eq:fullchainrule}\\
	&=n  \sum_{i=1}^n  \widetilde{\mP}_{T}(i) D(\widetilde{\mP}_{\widetilde{X}|T=i}\Vert \mP_X)\label{eq:use_uniforma}\\
	&= n D(\widetilde{\mP}_{\widetilde{X}|T}\Vert \mP_X~|~\widetilde{\mP}_{T})\\
	&= n D(\widetilde{\mP}_{\widetilde{X} T} \Vert \mP_{X}\widetilde{\mP}_T) \label{eq:use_chain_rule1}\\
	&= n D(\widetilde{\mP}_{T|\widetilde{X}} \Vert \widetilde{\mP}_T~|~ \widetilde\mP_{\widetilde{X}}) + nD(\widetilde{\mP}_{\widetilde{X}}\Vert \mP_X)  \label{eq:use_chain_ruleb}\\
	&= n I(\widetilde{X};T) + nD(\widetilde{\mP}_{\widetilde{X}}\Vert \mP_X) \label{eq:use_mutualinfo} \\
	&\geq n I(\widetilde{X};T)  \label{eq:use_chain_rule2a} \\
	&\geq nI(\widetilde{X};Y_1,\ldots, Y_{\kappa}) \label{eq:use_ng}
\end{align}
where Step \eqref{eq:fullchainrule} follows from \cite[Theorem 2.16(c)]{Polyanskiy_book},  Step \eqref{eq:use_uniforma} is because $\widetilde{\mP}_T(i) = 1/n, \forall i\in [n]$. Step \eqref{eq:use_chain_rule1} follows from \cite[Theorem 2.16(a)]{Polyanskiy_book}. Step \eqref{eq:use_chain_ruleb} follows from the chain rule of relative entropy 
\cite[Theorem 2.15]{Polyanskiy_book}. Step \eqref{eq:use_mutualinfo} follows from \cite[Theorem 3.2(a)]{Polyanskiy_book}. 
Step \eqref{eq:use_chain_rule2a} follows from the non-negativity of relative entropy. Step \eqref{eq:use_ng} follows from the data processing inequality and the fact that $(Y_1,\ldots, Y_{\kappa})$ is a deterministic function of $T$.

Next we relate $I(\widetilde{X};Y_1,\ldots, Y_{\kappa})$ to $J(\mP_X, (\mP_{Y_k\mid X})_{k\in[\kappa]})$. Since $\epsilon$ is given by the lemma, we use $\epsilon'$ in the following definitions of a set and a function with respect to $\epsilon'$. For $\epsilon'\geq 0$, define the set
\begin{equation}
\begin{aligned}\label{eq:defQepsilon}
	&\mathcal{Q}_{\epsilon'} \triangleq \Big\{ \mQ_{XY_1\cdots Y_{\kappa}} \in \mathcal{P}(\mathcal{X}\times \mathcal{Y}_1\times \cdots \times \mathcal{Y}_{\kappa}) \colon \\
	&|\mQ_{XY_k}(x,y_k) - \mP_{XY_k}(x,y_k)| \leq \epsilon' \mP_{XY_k}(x,y_k), \forall k\in [\kappa] \Big\}.
\end{aligned}
\end{equation}
Accordingly, define the function
\begin{align}
	J_{\epsilon'} \triangleq \min_{\mQ_{XY_1\cdots Y_{\kappa}} \in \mathcal{Q}_{\epsilon'}} I_{\mQ_{XY_1\cdots Y_{\kappa}}}(X;Y_1,\ldots, Y_{\kappa}).\label{eq:defJepsilon}
\end{align}
Observe that for every $0\leq \epsilon_1' < \epsilon'_2$, $\mathcal{Q}_{\epsilon_1'} \subseteq \mathcal{Q}_{\epsilon_2'}$, so we have that $J_{\epsilon'_1} \geq J_{\epsilon'_2}$, i.e., $J_{\epsilon'}$ is monotonically decreasing. Moreover, we have the following lemma.

\begin{lemma}\label{lem:Jepsilon}
\begin{align}
\lim_{\epsilon' \rightarrow 0+}J_{\epsilon'}=J_0.
\end{align}
\end{lemma}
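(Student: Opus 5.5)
We prove $\lim_{\epsilon'\to 0+}J_{\epsilon'}=J_0$ by compactness together with continuity of mutual information. First we identify $J_0$. The set $\mathcal{Q}_0$ consists of all $\mQ_{XY_1\cdots Y_\kappa}$ with $\mQ_{XY_k}=\mP_{XY_k}$ for every $k\in[\kappa]$. Any such $\mQ$ has $X$-marginal $\mP_X$ and conditional marginals $\mQ_{Y_k\mid X}=\mP_{Y_k\mid X}$ wherever $\mP_X(x)>0$. Conversely, $\mP_X\mQ_{Y_1\cdots Y_\kappa\mid X}$ lies in $\mathcal{Q}_0$ for every $\mQ_{Y_1\cdots Y_\kappa\mid X}\in\Gamma$. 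Off the support of $\mP_X$ the conditional law does not affect the mutual information, and for such $x$ we may set it to $\prod_k \mP_{Y_k\mid X=x}$ to land in $\Gamma$. Hence $J_0=J(\mP_X,(\mP_{Y_k\mid X})_{k\in[\kappa]})$, although for the limit statement only the definition \eqref{eq:defJepsilon} at $\epsilon'=0$ is needed.

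Next, since $J_{\epsilon'}$ is monotonically nonincreasing in $\epsilon'$ (as observed above) and bounded below by $0$, the limit $L\triangleq\lim_{\epsilon'\to 0+}J_{\epsilon'}$ exists and satisfies $L\le J_0$. It therefore suffices to show $L\ge J_0$. Take a sequence $\epsilon'_m\downarrow 0$. Each $\mathcal{Q}_{\epsilon'_m}$ is a nonempty closed (indeed compact) polytope in the finite-dimensional simplex $\mathcal{P}(\mathcal{X}\times\mathcal{Y}_1\times\cdots\times\mathcal{Y}_\kappa)$; it is nonempty because it contains $\mathcal{Q}_0\neq\emptyset$ (for instance the conditionally independent coupling). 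Since $I_{\mQ}(X;Y_1,\ldots,Y_\kappa)$ is continuous in $\mQ$ on the simplex, there is a minimizer $\mQ^{(m)}\in\mathcal{Q}_{\epsilon'_m}$ with $I_{\mQ^{(m)}}=J_{\epsilon'_m}$. By compactness of the simplex, pass to a subsequence with $\mQ^{(m)}\to\mQ^\ast$.

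The defining inequalities $|\mQ^{(m)}_{XY_k}(x,y_k)-\mP_{XY_k}(x,y_k)|\le\epsilon'_m\mP_{XY_k}(x,y_k)$ pass to the limit, since marginalization is linear and hence continuous. This gives $\mQ^\ast_{XY_k}=\mP_{XY_k}$ for all $k$, so $\mQ^\ast\in\mathcal{Q}_0$. Continuity of mutual information then yields
\begin{align*}
L=\lim_m J_{\epsilon'_m}=\lim_m I_{\mQ^{(m)}}(X;Y_1,\ldots,Y_\kappa)=I_{\mQ^\ast}(X;Y_1,\ldots,Y_\kappa)\ge J_0 .
\end{align*}
Combined with $L\le J_0$, this proves the claim.

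The only delicate point is the continuity of $\mQ\mapsto I_{\mQ}(X;Y_1,\ldots,Y_\kappa)$ on the whole closed simplex, including distributions with zero entries. We handle it by writing
\begin{align*}
I_{\mQ}(X;Y_1,\ldots,Y_\kappa)=H(X)+H(Y_1,\ldots,Y_\kappa)-H(X,Y_1,\ldots,Y_\kappa).
\end{align*}
Each entropy is a finite sum of terms $-q\log_2 q$, and $q\mapsto -q\log_2 q$ is continuous on $[0,1]$ with the convention $0\log_2 0=0$. Because $\epsilon'\to 0$ forces $\mQ_{XY_k}$ to vanish exactly where $\mP_{XY_k}$ does, no further issues arise at the boundary.
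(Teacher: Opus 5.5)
Your proof is correct and follows essentially the same route as the paper: take minimizers $\mQ^{(m)}\in\mathcal{Q}_{\epsilon'_m}$, extract a convergent subsequence by compactness of the simplex, note that the limit lies in $\mathcal{Q}_0$, and use continuity of mutual information. The paper phrases this as a contradiction with $\liminf_{\epsilon'\to 0+}J_{\epsilon'}=J_0-\eta$, while you argue directly through the monotone limit; one small slip is that, since $J_{\epsilon'}$ increases as $\epsilon'\downarrow 0$, the limit exists because of the upper bound $J_{\epsilon'}\le J_0$ (which you do state), not the lower bound $0$.
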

\begin{proof} Since $J_{\epsilon'}\leq J_0$ for all ${\epsilon'}\geq 0$, it suffices to prove the claim $\liminf_{{\epsilon'}\rightarrow 0+}J_{\epsilon'}=J_0$. To set up a proof by contradiction, suppose this claim is not true, i.e., $\liminf_{{\epsilon'}\rightarrow 0+}J_{\epsilon'}=J_0-\eta$ for some $\eta>0$. Thus, there exists a sequence $\epsilon'_m>0$, $m\in\mathbb{N}$ that approaches $0$ as $m\rightarrow\infty$ such that $J_{\epsilon'_m}\leq J_0-\eta$ for all $m\in\mathbb{N}$. Noting that $J_{\epsilon'_m}$ is defined as the minimum over the compact set $\mathcal{Q}_{\epsilon'_m}$, let $\mQ^{(m)}\in\mathcal{Q}_{\epsilon'_m}$ be a corresponding minimizer, for all $m\in\mathbb{N}$. Recall that the probability simplex over a finite alphabet is compact, and every sequence in a compact set has a convergent subsequence whose limit lies in the same set. Thus, there is a subsequence of $\mQ^{(m)}$ (which with a small abuse of notation we denote also as $\mQ^{(m)}$), such that the $\mQ^{(m)}\rightarrow \mQ^*$. Moreover since for all $m\in\mathbb{N}$, we have $\mQ^{(m)}\in\mathcal{Q}_{\epsilon'_m}$, i.e., $|\mQ^{(m)}(x,y_k) - \mP_{XY_k}(x,y_k)| \leq \epsilon'_m \mP_{XY_k}(x,y_k)$, and $\epsilon'_m\rightarrow 0$, it follows that the limiting value $Q^*\in\mathcal{Q}_0= \Big\{ \mQ_{XY_1\cdots Y_{\kappa}}   \colon  \mQ_{XY_k} = \mP_{XY_k},\forall k\in [\kappa] \Big\}$. Thus, $J_0-\eta\geq I_{\mQ^*}(X;Y_1,\dots,Y_\kappa)\geq \min_{\mQ\in\mathcal{Q}_0}I_{\mQ}(X;Y_1,\dots,Y_\kappa)=J_0$. The contradiction completes the proof.
\end{proof}
It follows that there exists $\delta(\epsilon') \geq 0$ which tends to $0$ as $\epsilon' \to 0$ such that 
\begin{align}
	J_{\epsilon'} &= J_0 - \delta(\epsilon') \\
	&= J(\mP_X,(\mP_{Y_k\mid X})_{k\in [\kappa]})- \delta(\epsilon') \label{eq:Jepsilon_with_J0}
\end{align}
where \eqref{eq:Jepsilon_with_J0} is due to \eqref{eq:defJepsilon} and \eqref{eq:def_JK}.

Next we show that $\widetilde{\mP}_{\widetilde{X} Y_1 \cdots Y_K} \in \mathcal{Q}_{\epsilon}$.
 For each $k\in [\kappa]$, we compute the marginal distribution for $(\widetilde{X}, Y_k)$.
For every $(x,y_k)\in \mathcal{X}\times \mathcal{Y}_k$,
\begin{align}
	&\widetilde{\mP}_{\widetilde{X} Y_k}(x,y_k) \notag \\
	&=\frac{1}{n} \sum_{i=1}^n \widetilde{\mP}_{\widetilde{X}Y_k \mid T=i}(x,y_k) \\
	&= \frac{1}{n}\sum_{i=1}^n\widetilde{\mP}_{\widetilde{X}_i}(x) \mathbb{I}(y_{k,i}=y_k) \\
	&= \frac{1}{n} \sum_{i=1}^n \sum_{x^n \in \widetilde{\mathcal{T}}} \widetilde{\mP}_{\widetilde{X}^n}(x^n) \mathbb{I}(x_i=x,y_{k,i}=y_k) \\
	&= \sum_{x^n \in \widetilde{\mathcal{T}}} \widetilde{\mP}_{\widetilde{X}^n}(x^n) \Big(\frac{1}{n} \sum_{i=1}^n\mathbb{I}(x_i=x,y_{k,i}=y_k) \Big)\\
	&= \sum_{x^n \in \widetilde{\mathcal{T}}} \widetilde{\mP}_{\widetilde{X}^n}(x^n) \widehat{\mP}_{x^n,y_k^n}(x,y_k)
\end{align}
By definition, for each $x^n \in \widetilde{\mathcal{T}}$, we have $(x^n,y_k^n) \in \mathcal{T}_{\epsilon}^{(n)}(\mP_{XY_k})$, which means
\begin{align}
	|\widehat{\mP}_{x^n,y_k^n}(x,y_k) - \mP_{XY_k}(x,y_k)| \leq \epsilon \mP_{XY_k}(x,y_k), ~~\forall (x,y_k) \in \mathcal{X}\times \mathcal{Y}_k.
\end{align}
It follows that, $\forall (x,y_k) \in \mathcal{X}\times \mathcal{Y}_k$, 
\begin{align}
	&|\widetilde{\mP}_{\widetilde{X} Y_k}(x,y_k) - \mP_{XY_k}(x,y_k)| \notag \\
	&\leq \sum_{x^n \in \widetilde{\mathcal{T}}}\widetilde{\mP}_{\widetilde{X}^n}(x^n)|\widehat{\mP}_{x^n,y_k^n}(x,y_k) - \mP_{XY_k}(x,y_k)| \label{eq:use_jensen} \\
	&\leq \epsilon \mP_{XY_k}(x,y_k)
\end{align}
where \eqref{eq:use_jensen} applies Jensen's inequality to the function $f(z)=|z|$.
Therefore, by definition $\widetilde{\mP}_{\widetilde{X}Y_1\cdots Y_{\kappa}} \in \mathcal{Q}_{\epsilon}$. Since $J_{\epsilon}$ is defined as the minimum of mutual information over $\mathcal{Q}_{\epsilon}$, we have,
\begin{align}
	I(\widetilde{X};Y_1,\ldots, Y_{\kappa}) \geq  J_{\epsilon}  \stackrel{\eqref{eq:Jepsilon_with_J0}}{=} J(\mP_X,(\mP_{Y_k\mid X})_{k\in [\kappa]}) - \delta(\epsilon).\label{eq:Ibound}
\end{align}

It follows from \eqref{eq:use_ng} and \eqref{eq:Ibound} that
\begin{align}
	p \leq 2^{-nI(\widetilde{X};Y_1,\ldots, Y_{\kappa})} \leq 2^{-nJ_{\epsilon}} = 2^{-n (J(\mP_X,(\mP_{Y_k\mid X})_{k\in [\kappa]})-\delta(\epsilon))}.
\end{align}
We thus conclude the proof. \hfil \qed

\section{Proof of Theorem \ref{thm:NSBC_Kuser}: Achievability}\label{proof:NSBC_Kuser}
We prove the achievability by generalizing the authentication solution used in Section \ref{proof:NSBC_capacity}.
\subsection{Towards a $K$-user authentication solution}
In the following, we write $J(\mP_X,(\mN_{Y_k\mid X})_{k\in \mathcal{K}})$ simply as $J_{\mathcal{K}}$, for each $\emptyset \neq \mathcal{K}\subseteq [K]$.
We shall prove that, for each $\mP_X\in \mathcal{P}(\mathcal{X})$, $(R_1,\ldots,R_K)\in \mathbb{R}_{> 0}^K$ is achievable if
\begin{align} \label{eq:rate_K}
	\sum_{k\in \mathcal{K}} R_k < J_{\mathcal{K}}, \quad\forall \emptyset \neq \mathcal{K}\subseteq [K]
\end{align}
whenever $\min_{k\in [K]} \{I  (X;Y_k)\} > 0$. Note that if, for example, $I(X;Y_1)=0$, then the achievability of any $(0,R_2,\ldots, R_K)$ satisfying \eqref{eq:rate_K} is implied by the degenerate $(K-1)$-user solution where User $1$ is absent.

Firstly, the authentication solution, when extended to the $K$-user case, has the form
\begin{align}
	&\mZ(x^n, \widehat{w}_1,\ldots, \widehat{w}_K\mid (w_1,\ldots, w_K), y_1^n,\ldots, y_K^n)\notag\\
	&= \mZ(x^n) \mZ(\widehat{w}_1,\ldots, \widehat{w}_K\mid (w_1,\ldots, w_K), x^n, y_1^n,\ldots, y_K^n)\label{eq:impliesKC0}
\end{align}
where $\mZ(x^n)$ specifies a pmf for $X^n$. To specify $\mZ(\widehat{w}_1,\ldots, \widehat{w}_K\mid (w_1,\ldots, w_K), x^n, y_1^n,\ldots, y_K^n)$, for each $(x^n,y_1^n,\ldots, y_K^n)$,  let $(A_1,A_2,\ldots, A_K)$ be a tuple of jointly distributed binary random variables drawn according to a distribution $\mP^{(x^n,y_1^n,\ldots, y_K^n)}_{A_1\cdots A_K}\in \mathcal{P}(\{0,1\}^K)$. Specifically, $A_k=1$ means the scheme produces the correct message for User $k$, and $A_k=0$ means the scheme produces an incorrect message, chosen randomly at uniform, for User $k$.  Then for $\mathcal{K}\subseteq [K]$, and when $(\widehat{w}_k = w_k, \forall k\in \mathcal{K} ~\mbox{and} ~ \widehat{w}_{k'} \neq w_{k'}, \forall k'\in [K]\setminus \mathcal{K})$, we have
\begin{align}
	&\mZ(\widehat{w}_1,\ldots, \widehat{w}_K\mid (w_1,\ldots, w_K), x^n, y_1^n,\ldots, y_K^n) \notag \\
	&= \frac{\mP^{(x^n,y_1^n,\ldots, y_K^n)}_{A_1\cdots A_K}\big((A_k=1)_{k\in \mathcal{K}},(A_{k'}=0)_{k'\in [K]\setminus \mathcal{K}}\big)}{\prod_{k'\in [K]\setminus \mathcal{K}}(M_{k'}-1)} 
\end{align}
Besides the validity of $\mP^{(x^n,y_1^n,\ldots, y_K^n)}_{A_1\cdots A_K}$ as a probability distribution pointwise for each $(x^n,y_1^n,\ldots, y_K^n)$, we further require the following conditions to hold. 
\begin{enumerate}[label=P\arabic*:]
	\item For each $k \in [K]$, the marginal distribution $\mP^{(x^n,y_1^n,\ldots, y_K^n)}_{A_1\cdots A_{k-1}A_{k+1}\cdots A_K}$ (without $A_k$) does not depend on $y_k^n$;
	\item For each $\mathcal{K} \subseteq [K]$, let 
	\begin{align}
	a_{\mathcal{K}}(x^n,y_1^n,\ldots, y_K^n) \triangleq \Pr((A_k = 1)_{\forall k\in \mathcal{K}} \mid (x^n,y_1^n,\ldots, y_K^n)),
	\end{align}
	with the convention that $a_{\emptyset}(x^n,y_1^n,\ldots, y_K^n) \triangleq 0$. Then, for every non-empty set $\mathcal{K} \subseteq [K]$,
\begin{align}
	&\mathbb{E}_{X^n \sim \mZ(x^n)}\big[a_{\mathcal{K}}(X^n,y_1^n,\ldots, y_K^n)\big] \notag \\
	&=	\mathbb{E}_{X^nA_1\cdots A_K\mid (y_1^n,\ldots, y_K^n)}\Big[ \prod_{k\in \mathcal{K}}A_k \Big] \\
	&= \prod_{k\in \mathcal{K}}\frac{1}{M_k},
\end{align}
for all $(y_1^n,\ldots, y_K^n)$. 
\end{enumerate}
It is noteworthy that conditioned on $(x^n,y_1^n,\ldots, y_K^n)$, the values $a_{\mathcal{K}}$ correspond to the definition of the `\emph{M\"obius parameter associated with $\mathcal{K}$},' albeit with $0$ and $1$ switched, in \cite[Equation (10)]{Mobius}.
Observe that $a_{\mathcal{K}}(x^n,y_1^n,\ldots, y_K^n)$ is determined by the marginal distribution $\mP_{(A_k)_{k\in \mathcal{K}}}^{(x^n,y_1^n,\ldots, y_k^n)}$, i.e., the marginal for $(A_k)_{k\in \mathcal{K}}$. Therefore, P1 implies that $a_{\mathcal{K}}(x^n,y_1^n,\ldots, y_K^n)= a_{\mathcal{K}}(x^n,(y_k^n)_{k\in \mathcal{K}})$, which depends only on $(x^n,(y_k^n)_{k\in \mathcal{K}})$.

We claim that any $\mZ(x^n, \widehat{w}_1,\ldots, \widehat{w}_K\mid (w_1,\ldots, w_K), y_1^n,\ldots, y_K^n)$ thus specified satisfies the non-signaling conditions \textit{C0} through \textit{CK}.  Let us first verify \textit{C1}.
\begin{align}
	&\sum_{\widehat{w}_1}\mZ(x^n,\widehat{w}_1,\ldots, \widehat{w}_K\mid (w_1,\ldots, w_K),  y_1^n,\ldots, y_K^n)\notag \\
	&= \mZ(x^n)\sum_{\widehat{w}_1}\mZ(\widehat{w}_1,\ldots, \widehat{w}_K\mid (w_1,\ldots, w_K), x^n, y_1^n,\ldots, y_K^n)\label{eq:checkingC1}
\end{align} 
Suppose, without loss of generality, $\widehat{w}_k\neq w_k, \forall 2\leq k\leq \ell$ and $\widehat{w}_{k'}= w_{k'}, \forall k' > \ell$. Then we have
\begin{align}
	\eqref{eq:checkingC1} &=\mZ(x^n)\frac{\mP^{(x^n,y_1^n,\ldots, y_K^n)}_{A_1\cdots A_K}\big(1,\overbrace{0,\ldots, 0}^{\ell-1}, 1,\ldots, 1)}{\prod_{k'=2}^{\ell} (M_{k'}-1)} \notag \\
	&\quad+\mZ(x^n)(M_1-1) \frac{\mP^{(x^n,y_1^n,\ldots, y_K^n)}_{A_1\cdots A_K}\big(0,\overbrace{0,\ldots, 0}^{\ell-1}, 1,\ldots, 1)}{\prod_{k'=1}^{\ell} (M_{k'}-1)} \\
	&=\mZ(x^n)\frac{\mP^{(x^n,y_1^n,\ldots, y_K^n)}_{A_2\cdots A_K}\big(\overbrace{0,\ldots, 0}^{\ell-1}, 1,\ldots, 1)}{\prod_{k'=2}^{\ell} (M_{k'}-1)}
\end{align}
which does not depend on $y_1^n$ by the first condition. By symmetry, this argument also shows that \textit{Ck} holds for $k\in \{2,\ldots, K\}$.

We next verify \textit{C0}. For $\mathcal{K}\subseteq [K]$, and when $(\widehat{w}_k = w_k, \forall k\in \mathcal{K} ~\mbox{and} ~ \widehat{w}_{k'} \neq w_{k'}, \forall k'\in [K]\setminus \mathcal{K})$, write $q = 1/\prod_{k'\in [K]\setminus \mathcal{K}}(M_{k'}-1)$, and we have
\begin{align}
	&\sum_{x^n}\mZ(x^n,\widehat{w}_1,\ldots, \widehat{w}_K\mid (w_1,\ldots, w_K),  y_1^n,\ldots, y_K^n)\notag \\
	&=q\cdot \mathbb{E}_{X^n \sim \mZ(x^n)} \Big[\mP^{(X^n,y_1^n,\ldots, y_K^n)}_{A_1\cdots A_K}\big((A_k=1)_{k\in \mathcal{K}},(A_{k'}=0)_{k'\in [K]\setminus \mathcal{K}}\big)\Big]\\
	&=q\cdot\mathbb{E}_{X^n \sim \mZ(x^n)} \mathbb{E}_{A_1\cdots A_K\mid (x^n,y_1^n,\ldots, y_K^n)}  \Big[\prod_{k\in \mathcal{K}} A_k \prod_{k'\in [K]\setminus \mathcal{K}} (1-A_{k'}) \Big]\\
	&=q\cdot \mathbb{E}_{X^nA_1\cdots A_K\mid (y_1^n,\ldots, y_K^n)}\Big[\prod_{k\in \mathcal{K}} A_k \prod_{k'\in [K]\setminus \mathcal{K}} (1-A_{k'}) \Big] \label{eq:expectation}
\end{align}
Due to the linearity of expectation, the expectation in \eqref{eq:expectation} can be expanded as some linear combination of $\mathbb{E}[\prod_{k\in \mathcal{S}}A_k]$ for $\emptyset\neq \mathcal{S}\subseteq [K]$, i.e., there exist some constants $\alpha_{\mathcal{S}}$ such that,
\begin{align}
q\cdot \mathbb{E}_{X^nA_1\cdots A_K\mid (y_1^n,\ldots, y_K^n)}\Big[\prod_{k\in \mathcal{K}} A_k \prod_{k'\in [K]\setminus \mathcal{K}} (1-A_{k'}) \Big] &=\sum_{\emptyset\neq \mathcal{S}\subseteq [K]}\alpha_{\mathcal{S}}\mathbb{E}[\prod_{k\in \mathcal{S}}A_k].
\end{align}
Let $(\widetilde{A}_1,\ldots, \widetilde{A}_K)$ be $K$ independent random variables such that $\mathbb{E}[\widetilde{A}_k]=1/M_k$ for all $k\in[K]$. Observe that  
\begin{align}
\mathbb{E}[\prod_{k\in \mathcal{S}}\widetilde{A}_k] = \frac{1}{\prod_{k\in \mathcal{S}}M_k}= \mathbb{E}[\prod_{k\in \mathcal{S}}A_k] , \quad\forall \emptyset \neq \mathcal{S} \subseteq [K].
\end{align}
Then we have,
\begin{align}
	&\sum_{\emptyset\neq \mathcal{S}\subseteq [K]}\alpha_{\mathcal{S}}\mathbb{E}[\prod_{k\in \mathcal{S}}A_k]\notag\\
	&=\sum_{\emptyset\neq \mathcal{S}\subseteq [K]}\alpha_{\mathcal{S}}\mathbb{E}[\prod_{k\in \mathcal{S}}\widetilde{A}_k]\\
	 &=q\cdot \mathbb{E}\Big[\prod_{k\in \mathcal{K}} \widetilde{A}_k \prod_{k'\in [K]\setminus \mathcal{K}} (1-\widetilde{A}_{k'}) \Big]\\
	&= q\cdot \prod_{k\in \mathcal{K}}\mathbb{E}[\widetilde{A}_k] \prod_{k'\in [K]\setminus \mathcal{K}} (1-\mathbb{E}[\widetilde{A}_{k'}])\\
	&= q \cdot \prod_{k\in \mathcal{K}}\frac{1}{M_k}\prod_{k'\in [K]\setminus \mathcal{K}} \frac{M_{k'}-1}{M_{k'}} \\
	&= \prod_{k\in [K]}\frac{1}{M_k}
\end{align}
which does not depend on $(w_1,\ldots, w_K)$. Thus, $C0$ is satisfied.

\subsection{$K$-user authentication solution}
Recall that we have fixed $\mP_X$. Now let $\mZ(x^n) = \mP_{X}^{\otimes n}$. 
It remains to specify $\mP^{(x^n,y_1^n,\ldots, y_K^n)}_{A_1\cdots A_K}$. To this end, we will first specify the values 
\begin{align*}
	a_{\mathcal{K}}(x^n,(y_k^n)_{k\in \mathcal{K}}), \quad\forall \emptyset \neq \mathcal{K} \subseteq [K]
\end{align*}
pointwise for each $(x^n,y_1^n,\ldots, y_K^n)$,
and then show that the values yield valid $\mP^{(x^n,y_1^n,\ldots, y_K^n)}_{A_1\cdots A_K}$ (for sufficiently large $n$). 
Let $\overline{p}$ denote the negation of the predicate $p$.  Let $p\wedge q$ denote the logical conjunction of the predicates $p$ and $q$.

Let $\epsilon \in (0,1)$. For each $k\in [K]$, let $I_k(x^n, y_k^n)$ denote the predicate
\begin{align*}
	(x^n, y_k^n) \in \mathcal{T}^{(n)}_{\epsilon} (\mP_X\mN_{Y_k\mid X}).
\end{align*}
Define the constant $t_{\emptyset} = 1$, and for $\emptyset \neq \mathcal{K} \subseteq [K]$, let
\begin{align}
	t_{\mathcal{K}}((y_k^n)_{k\in \mathcal{K}}) \triangleq \Pr\Big(  \bigwedge_{k\in \mathcal{K}} I_k(X^n,y_k^n) \Big),
\end{align}
where $X^n \sim \mP_{X}^{\otimes n}$.

Define the constant $c_{\emptyset} \triangleq 1$. For $\emptyset \neq \mathcal{K} \subseteq [K]$, let $c_{\mathcal{K}}((y_k^n)_{k\in \mathcal{K}})$ be a function that depends only on $(y_k^n)_{k\in \mathcal{K}}$. 
We will iteratively define the values of $c_{\mathcal{K}}((y_k^n)_{k\in \mathcal{K}})$ and $a_{\mathcal{K}}(x^n,(y_k^n)_{k\in \mathcal{K}})$ for $K$ levels (steps). 
Specifically, for the level $L=1,2,\cdots, K$, and for each $\mathcal{K}\subseteq [K]$ such that $|\mathcal{K}| = L$, define, for each case indicated by a subset $\mathcal{S} \subseteq \mathcal{K}$ (thus $2^L$ cases),
\begin{align}
	a_{\mathcal{K}}(x^n, (y_k^n)_{k\in \mathcal{K}}) \triangleq c_{\mathcal{S}}((y_s^n)_{s\in \mathcal{S}}), ~ \mbox{if}~  \bigwedge_{s \in \mathcal{S}}\overline{I}_s(x^n,y_s^n) \bigwedge_{s'\in \mathcal{K} \setminus \mathcal{S}} I_{s'}(x^n,y_{s'}^n) 
\end{align}
where $c_{\mathcal{K}}((y_s^n)_{s\in \mathcal{K}})$ is chosen to meet the condition $\mathbb{E}_{X^n \sim \mP_{X}^{\otimes n}}\big[a_{\mathcal{K}}(X^n,y_1^n,\ldots, y_K^n)\big] = \frac{1}{\prod_{k\in \mathcal{K}}M_k}$. Note that unless $\mathcal{S} =\mathcal{K}$, $c_{\mathcal{S}}((y_s^n)_{s\in \mathcal{S}})$ has already been chosen before the $L^{th}$ level.

Let us work through a concrete example. Say $K=3$. We will  suppress the parameters $(x^n,y_1^n,\ldots, y_K^n)$ for notational conciseness.
\begin{align}
	&L=1:~~ \forall k\in \{1,2,3\}, \notag \\
	& a_{\{k\}} \triangleq
	\begin{cases}
		1, & \mbox{if}~ I_k\\
		c_{\{k\}}, & \mbox{if}~ \overline{I}_k
	\end{cases}\\
	&c_{\{k\}}\triangleq \frac{\frac{1}{M_k} - t_{\{k\}}}{1-t_{\{k\}}}
\end{align}
\begin{align}
	&L = 2\colon ~~\forall (i,j) \in \{(1,2),(1,3),(2,3)\}, \notag\\
	& a_{\{i,j\}} \triangleq \begin{cases}
		1, & \mbox{if}~ I_{i} \wedge I_{j} \\
		c_{\{i\}}, & \mbox{if}~ \overline{I}_{i} \wedge I_{j} \\
		c_{\{j\}}, & \mbox{if}~ I_{i} \wedge \overline{I}_{j} \\
		c_{\{i,j\}}, & \mbox{if}~ \overline{I}_{i} \wedge \overline{I}_{j}
	\end{cases} \\
	& c_{\{i,j\}} \triangleq \frac{\frac{1}{M_{i}M_{j}}-t_{\{i,j\}}-(t_{\{j\}}-t_{\{i,j\}})c_{\{i\}} - (t_{\{i\}}-t_{\{i,j\}})c_{\{j\}}}{1-t_{\{i\}}-t_{\{j\}}+t_{\{i,j\}}}
\end{align}
\begin{align}
	&L = 3, \notag \\
	&a_{\{1,2,3\}} \triangleq \begin{cases}
		1, & \mbox{if}~ I_1 \wedge I_2 \wedge I_3 \\
		c_{\{1\}}, & \mbox{if}~ \overline{I}_1 \wedge I_2 \wedge I_3 \\
		c_{\{2\}}, & \mbox{if}~ I_1 \wedge \overline{I}_2 \wedge I_3 \\
		c_{\{3\}}, & \mbox{if}~ I_1 \wedge I_2 \wedge \overline{I}_3 \\
		c_{\{1,2\}}, & \mbox{if}~ \overline{I}_1 \wedge \overline{I}_2 \wedge I_3 \\
		c_{\{1,3\}}, & \mbox{if}~ \overline{I}_1 \wedge I_2 \wedge \overline{I}_3 \\
		c_{\{2,3\}}, & \mbox{if}~ I_1 \wedge \overline{I}_2 \wedge \overline{I}_3 \\
		c_{\{1,2,3\}}, & \mbox{if}~ \overline{I}_1 \wedge \overline{I}_2 \wedge \overline{I}_3
	\end{cases}\\
	&c_{\{1,2,3\}} \triangleq \frac{\frac{1}{M_1M_2M_3} - \sum_{\mathcal{S}\subset \{1,2,3\}}c_{\mathcal{S}}\sum_{\mathcal{A} \subseteq \mathcal{S}}(-1)^{|\mathcal{A}|}t_{\mathcal{A} \cup (\{1,2,3\}\setminus \mathcal{S})}}{1-t_{\{1\}}-t_{\{2\}} - t_{\{3\}} + t_{\{1,2\}}+ t_{\{1,3\}}+t_{\{2,3\}}-t_{\{1,2,3\}}}
\end{align}
 Since $t_{\mathcal{K}}$ only depends on $(y_k^n)_{k\in \mathcal{K}}$, it can be readily verified that in general $c_{\mathcal{K}}$ only depends on $(y_k^n)_{k\in \mathcal{K}}$ and accordingly $a_{\mathcal{K}}$ only depends on $(x^n,(y_k^n)_{k\in \mathcal{K}})$. Generally, the condition $\mathbb{E}_{X^n \sim \mP_{X}^{\otimes n}}\big[a_{\mathcal{K}}(X^n,y_1^n,\ldots, y_K^n)\big] = \frac{1}{\prod_{k\in \mathcal{K}}M_k}$ leads to the recurrence formula 
\begin{align} \label{eq:def_c}
	c_{\mathcal{K}} = \frac{\frac{1}{\prod_{k\in \mathcal{K}}M_k}- \sum_{\mathcal{S} \subset \mathcal{K}}c_{\mathcal{S}} \sum_{\mathcal{A} \subseteq \mathcal{S}} (-1)^{|\mathcal{A}|}t_{\mathcal{A} \cup (\mathcal{K} \setminus \mathcal{S})}}{\sum_{\mathcal{A} \subseteq \mathcal{K}}(-1)^{|\mathcal{A}|}t_{\mathcal{A}}}.
\end{align}
The following steps are used to derive \eqref{eq:def_c}.
\begin{align}
&\mathbb{E}_{X^n \sim \mP_{X}^{\otimes n}}\big[a_{\mathcal{K}}(X^n,y_1^n,\ldots, y_K^n)\big] \\
&= \sum_{\mathcal{S} \subseteq \mathcal{K}} c_{\mathcal{S}}  \Pr\Big(\bigwedge_{s \in \mathcal{S}}\overline{I}_s(X^n, y_s^n)  \bigwedge_{s'\in \mathcal{K} \setminus \mathcal{S}} I_{s'}(X^n, y_{s'}^n)  \Big)\\
&= \sum_{\mathcal{S} \subseteq \mathcal{K}} c_{\mathcal{S}} \mathbb{E} \Big[\prod_{s\in \mathcal{S}}(1-I_s) \prod_{s'\in \mathcal{K} \setminus \mathcal{S}} I_{s'} \Big] \label{eq:introduce_rvI} \\
&= \sum_{\mathcal{S} \subseteq \mathcal{K}} c_{\mathcal{S}}  \sum_{\mathcal{A} \subseteq \mathcal{S}} (-1)^{|\mathcal{A}|} \mathbb{E} \Big[ \prod_{s\in \mathcal{A}\cup (\mathcal{K} \setminus \mathcal{S})} I_s \Big]\\
&= \sum_{\mathcal{S} \subseteq \mathcal{K}} c_{\mathcal{S}}  \sum_{\mathcal{A} \subseteq \mathcal{S}} (-1)^{|\mathcal{A}|} \Pr \Big( \bigwedge_{s\in \mathcal{A} \cup (\mathcal{K} \setminus \mathcal{S})} I_s(X^n,y_s^n) \Big) \\
& = \sum_{\mathcal{S} \subseteq \mathcal{K}} c_{\mathcal{S}}  \sum_{\mathcal{A} \subseteq \mathcal{S}} (-1)^{|\mathcal{A}|} t_{\mathcal{A} \cup (\mathcal{K} \setminus \mathcal{S})} \label{eq:to_equal_M_inverse}
\end{align}
In Step \eqref{eq:introduce_rvI}, with a slight abuse of notation, we introduce $I_s$, for each $s\in \mathcal{K}$, as the binary random variable which equals $1$ if $I_s(X^n,y_{s}^n)$ is true, and $0$ otherwise. Setting \eqref{eq:to_equal_M_inverse} to $\frac{1}{\prod_{k\in \mathcal{K}}M_k}$ and rearranging the terms yields the formula in \eqref{eq:def_c}.

\subsection{Validity of the solution}
Applying Lemma \ref{lem:typical_off_K} with respect to each non-empty subset of users, we have that, for each non-empty subset $\mathcal{K}\subseteq [K]$, there exists $\delta_{\mathcal{K}}(\epsilon) \to 0$ as $\epsilon \to 0$ such that
\begin{align}
	t_{\mathcal{K}}((y_k^n)_{k\in \mathcal{K}}) \leq 2^{-n(J_{\mathcal{K}}-\delta_{\mathcal{K}}(\epsilon))}
\end{align}
for every $(y_k^n)_{k\in \mathcal{K}}$.
From now on, consider $\epsilon$ to be sufficiently small so that 
\begin{align}
	J_{\mathcal{K}} -\delta_{\mathcal{K}}(\epsilon) > 0, ~~\forall \emptyset \neq \mathcal{K} \subseteq [K].
\end{align}
For any $(R_1,\ldots, R_K)\in \mathbb{R}_{>0}^K$ such that
\begin{align} \label{eq:achi_rates_K}
	\sum_{k\in \mathcal{K}} R_k < J_{\mathcal{K}} - \delta_{\mathcal{K}}(\epsilon),~~ \forall \emptyset \neq \mathcal{K} \subseteq [K]
\end{align}
by picking
\begin{align}
	M_k \triangleq \lfloor 2^{nR_k} \rfloor, ~~\forall k\in [K],
\end{align}
it is guaranteed that, as $n\to \infty$,
\begin{align}
	\frac{\log_2(M_k)}{n} \to R_k,~~ \forall k\in [K].
\end{align}
Moreover, since
\begin{align}
	t_\mathcal{K} \leq 2^{-n(J_{\mathcal{K}}-\delta_{\mathcal{K}}(\epsilon))} < 2^{-n\sum_{k\in \mathcal{K}}R_k} \leq \frac{1}{\prod_{k\in \mathcal{K}}M_k}
\end{align}
we have
\begin{align}
	t_{\mathcal{K}} = o\left(\frac{1}{\prod_{k\in \mathcal{K}}M_k}\right),\quad \forall \emptyset \neq \mathcal{K} \subseteq [K].
\end{align}
It can now be proved by induction on $L=1,\ldots, K$ that
\begin{align}
	c_{\mathcal{K}} \simeq \frac{1}{\prod_{k\in \mathcal{K}}M_k}, ~~\forall \emptyset \neq \mathcal{K} \subseteq [K]. \label{eq:c_approx}
\end{align}
Here, $g(n) = o(f(n))$ means that $\frac{g(n)}{f(n)} \to 0$, and $g(n) \simeq f(n)$ means that $\frac{g(n)}{f(n)} \to 1$, as $n\to \infty$.
To see this, we first observe that for each singleton $\mathcal{K}=\{k\}$, \eqref{eq:def_c} implies $c_{\{k\}} = \frac{1/M_k-t_{\{k\}}}{1-t_{\{k\}}} \simeq 1/M_k$. Assume that this holds for every $\mathcal{K}$ such that $|\mathcal{K}|\leq L$. Then, for any $\mathcal{K}$ such that $|\mathcal{K}|=L+1$, it suffices to show that the denominator of \eqref{eq:def_c} $\to 1$ and the numerator of \eqref{eq:def_c} $\simeq \frac{1}{\prod_{k\in \mathcal{K}}M_k}$. Note that the denominator is a finite sum, whose $\mathcal{A}=\emptyset$ term equals 1. All remaining terms are absolutely upper bounded by exponentially decaying functions and therefore vanish asymptotically. Hence, the denominator converges to 1. In the numerator, the second term is a finite sum, in which for each $\mathcal{S}\subset \mathcal{K}$,
\begin{align*}
	c_{\mathcal{S}} \simeq \frac{1}{\prod_{k\in \mathcal{S}}M_k}
\end{align*}
due to the inductive hypothesis, and each term
\begin{align*}
	(-1)^{|\mathcal{A}|} t_{\mathcal{A} \cup (\mathcal{K}\setminus \mathcal{S})}
\end{align*}
 of the inner sum is absolutely upper bounded by $o(\frac{1}{\prod_{k\in \mathcal{K}\setminus \mathcal{S}}M_k})$. Therefore, the second term in the numerator is absolutely upper bounded by $o(\frac{1}{\prod_{k\in \mathcal{K}}M_k})$. It follows that the numerator $\simeq \frac{1}{\prod_{k\in \mathcal{K}}M_k}$.

It remains to show that, for sufficiently large $n$, the resulting $\{a_{\mathcal{K}}\}_{\mathcal{K} \subseteq [K]}$ values come from a valid distribution for $(A_1,\ldots, A_K)$, pointwise for every $(x^n,y_1^n,\ldots, y_K^n)$. To this end let us introduce the following lemma.
\begin{lemma} \label{lem:inclusion_exclusion}
	Let $\{a_{\mathcal{K}}\}_{\mathcal{K} \subseteq [K]}$ be a set of real values indexed by the subsets of $[K]$. Then there exists a realization of binary random variables $(A_1,A_2,\ldots, A_K) \sim \mP_{A_1\ldots A_K}\in \mathcal{P}(\{0,1\}^K)$ such that
	\begin{align}
		a_{\mathcal{K}} = \Pr( A_k=1, \forall k\in \mathcal{K}), ~~ \forall \mathcal{K} \subseteq [K],
	\end{align} 
	 if $a_{\emptyset} = 1$, and, for every $\mathcal{K} \subseteq [K]$, the resulting probability of the joint event $(A_k=1)_{k\in \mathcal{K}} \wedge (A_{k'}= 0)_{k'\in [K]\setminus \mathcal{K}}$ given by the inclusion-exclusion formula (\!\cite[Proposition 6]{Mobius})
	\begin{align}
		f_{\mathcal{K}} \triangleq \sum_{\mathcal{A} \subseteq [K]\setminus \mathcal{K}} (-1)^{|\mathcal{A}|} a_{\mathcal{A}\cup \mathcal{K}} 
	\end{align}
	is nonnegative.
\end{lemma}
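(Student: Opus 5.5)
The plan is to define the candidate distribution directly by $\mP_{A_1\cdots A_K}(\mathbf{1}_{\mathcal{K}}) \triangleq f_{\mathcal{K}}$ for each $\mathcal{K}\subseteq[K]$. Here $\mathbf{1}_{\mathcal{K}}\in\{0,1\}^K$ is the indicator vector with $A_k=1$ exactly for $k\in\mathcal{K}$. Then we check two things: that this is a pmf, and that its ``all ones on $\mathcal{K}$'' probabilities recover the $a_{\mathcal{K}}$. Nonnegativity of each $f_{\mathcal{K}}$ is exactly the hypothesis, so the substance is two Möbius-inversion identities on the Boolean lattice of subsets of $[K]$.

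\textbf{Step 1 (inversion).} For every $\mathcal{K}\subseteq[K]$ we show
\begin{align*}
\sum_{\mathcal{B}\supseteq\mathcal{K}} f_{\mathcal{B}} = a_{\mathcal{K}}.
\end{align*}
Substitute the definition of $f_{\mathcal{B}}$ and write $\mathcal{C}=\mathcal{A}\cup\mathcal{B}$, which is a disjoint union with $\mathcal{K}\subseteq\mathcal{B}\subseteq\mathcal{C}$. This gives
\begin{align*}
\sum_{\mathcal{B}\supseteq\mathcal{K}}\sum_{\mathcal{A}\subseteq[K]\setminus\mathcal{B}}(-1)^{|\mathcal{A}|}a_{\mathcal{A}\cup\mathcal{B}}
=\sum_{\mathcal{C}\supseteq\mathcal{K}} a_{\mathcal{C}}\sum_{\mathcal{K}\subseteq\mathcal{B}\subseteq\mathcal{C}}(-1)^{|\mathcal{C}\setminus\mathcal{B}|}.
\end{align*}
The inner sum equals $(1-1)^{|\mathcal{C}\setminus\mathcal{K}|}$ by the binomial theorem. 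It therefore vanishes unless $\mathcal{C}=\mathcal{K}$, and the whole expression collapses to $a_{\mathcal{K}}$. This double-sum reindexing is the only real computation, and it is where care with the index ranges is needed.

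\textbf{Step 2 (normalization and marginals).} Take $\mathcal{K}=\emptyset$ in Step 1. This gives $\sum_{\mathcal{B}\subseteq[K]} f_{\mathcal{B}}=a_\emptyset=1$. Together with $f_{\mathcal{B}}\ge 0$, this shows $\mP_{A_1\cdots A_K}$ is a valid pmf on $\{0,1\}^K$. For general $\mathcal{K}$, the event $\{A_k=1,\ \forall k\in\mathcal{K}\}$ is the disjoint union over $\mathcal{B}\supseteq\mathcal{K}$ of the atoms $\{(A_k)=\mathbf{1}_{\mathcal{B}}\}$. Its probability is therefore $\sum_{\mathcal{B}\supseteq\mathcal{K}}f_{\mathcal{B}}=a_{\mathcal{K}}$ by Step 1, which is the claimed representation.

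I expect no genuine obstacle; the statement is essentially the Boolean-lattice Möbius inversion cited from \cite[Proposition 6]{Mobius}. The one point to watch is the convention $a_\emptyset=1$ required here. It differs from the convention $a_\emptyset=0$ adopted earlier in P2, and the lemma should be applied with $a_\emptyset$ reset to $1$.
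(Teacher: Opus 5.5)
Your proposal is correct and follows the same route as the paper: you take the atoms $f_{\mathcal{K}}$ as the pmf and collapse the double sum via the reindexing $\mathcal{C}=\mathcal{A}\cup\mathcal{B}$ together with $(1-1)^{m}=0$ for $m\geq 1$. The paper carries this out only for $\mathcal{K}=\emptyset$ (normalization) and leaves the identity $\sum_{\mathcal{B}\supseteq\mathcal{K}} f_{\mathcal{B}}=a_{\mathcal{K}}$ implicit in the cited inclusion--exclusion formula, so your Step 1 for general $\mathcal{K}$ is a slightly more complete version of the same argument; your caveat is also well taken, since the lemma needs $a_{\emptyset}=1$, which matches $c_{\emptyset}=1$ in the construction but not the $a_{\emptyset}=0$ convention stated in P2.
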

\begin{proof}
	Taking the sum of $f_{\mathcal{K}}$ over all subsets $\mathcal{K} \subseteq [K]$, we have
	\begin{align}
		\sum_{\mathcal{K} \subseteq [K]} f_{\mathcal{K}} &= \sum_{\mathcal{K} \subseteq [K]}  \sum_{\mathcal{A} \subseteq [K]\setminus \mathcal{K}} (-1)^{|\mathcal{A}|} a_{\mathcal{A}\cup \mathcal{K}} \\
		&=\sum_{\mathcal{S} \subseteq [K]}\sum_{\mathcal{K} \subseteq \mathcal{S}} (-1)^{|\mathcal{S}|-|\mathcal{K}|}a_{\mathcal{S}} \label{eq:fixS}\\
		&= a_{\emptyset} \label{eq:cancel_except_emptyset}
	\end{align}
	Step \eqref{eq:fixS} follows because for each $\mathcal{S}\subseteq [K]$, the term $a_{\mathcal{S}}$ appears whenever $\mathcal{K}\subseteq \mathcal{S}$ and $\mathcal{A} = \mathcal{S} \setminus \mathcal{K}$. To see Step \eqref{eq:cancel_except_emptyset}, note that for any non-empty set $\mathcal{S}$, $\sum_{\mathcal{K} \subseteq \mathcal{S}} (-1)^{|\mathcal{S}|-|\mathcal{K}|} = (1-1)^{|\mathcal{S}|} = 0$.
	
	Therefore, if $a_{\emptyset} = 1$, the probability of those mutually exclusive events $(A_k=1)_{k\in \mathcal{K}} \wedge (A_{k'}= 0)_{k'\in [K]\setminus \mathcal{K}}$ for $\mathcal{K}\subseteq [K]$ add up to $1$, and since each of the probability $f_{\mathcal{K}}$ is non-negative, the existence of such $(A_1,\ldots, A_K)$ then follows. 
\end{proof}

We next group $(x^n,y_1,\ldots, y_K^n)$ into $2^K$ cases, each indexed by a unique subset $\mathcal{S} \subseteq [K]$, for which the predicate 
\begin{align*}
	\bigwedge_{s \in \mathcal{S}}\overline{I}_s(x^n,y_s^n) \bigwedge_{s'\in [K] \setminus \mathcal{S}} I_{s'}(x^n,y_{s'}^n)
\end{align*}
is true. 
In other words, $\mathcal{S}$ corresponds to the case for which authentication fails exactly at the users indexed by $k \in \mathcal{S}$.
For the case indexed by $\mathcal{S}$, we have
\begin{align} \label{eq:relation_ac}
	a_{\mathcal{K}} = c_{\mathcal{S}\cap \mathcal{K}}, ~~ \forall  \mathcal{K} \subseteq [K].
\end{align}
Applying Lemma \ref{lem:inclusion_exclusion} with this set of $\{a_{\mathcal{K}}\}$, we have that the condition for them to be realizable is,
\begin{align}
	\sum_{\mathcal{A} \subseteq [K]\setminus \mathcal{K}} (-1)^{|\mathcal{A}|} c_{\mathcal{S}\cap(\mathcal{A}\cup \mathcal{K})} \geq 0,~~\forall \mathcal{K} \subseteq [K].
\end{align}
Consider that each $\mathcal{A}\subseteq [K]\setminus \mathcal{K}$ in the summation is partitioned as $\mathcal{A} = \mathcal{I}\cup \mathcal{J}$ where $\mathcal{I} = \mathcal{A}\cap \mathcal{S}$ and $\mathcal{J} = \mathcal{A} \setminus \mathcal{I}= \mathcal{A}\cap (([K]\setminus \mathcal{K}) \setminus \mathcal{S})$. This is illustrated by the Venn diagram in Fig. \ref{fig:venn}.
\begin{figure}[htbp]
\centering
\begin{tikzpicture}[
	scale=0.9,
    transform shape,
    line join=round,
    line cap=round,
    >=stealth  
]
 
\begin{scope}
    \clip (0,0) ellipse[x radius=3, y radius=2];
    \fill[gray!25]
        (-0.35,3.0) --
        (5.0,3.0) --
        (5.0,-3.0) --
        (1.65,-3.0) --
        cycle;
\end{scope}

\begin{scope}
    \clip (0.5,0.75)
        ellipse[x radius=1.6, y radius=0.8];
    \fill[gray!10]
        (-0.35,3.0) --
        (1.65,-3.0) --
        (-4,-3.0) --
        (-4,3.0) --
        cycle;
\end{scope}

\begin{scope}
    \clip (0.5,0.75)
        ellipse[x radius=1.6, y radius=0.8];
    \fill[gray!65]
        (-0.35,3.0) --
        (4,3.0) --
        (4,-3.0) --
        (1.65,-3.0) --
        cycle;
\end{scope}

\draw[thick]
    (0,0) ellipse[x radius=3, y radius=2];

\begin{scope}
    \clip (0,0) ellipse[x radius=3, y radius=2];
    \draw[thick]
        (-3.6,-1.2)
        .. controls (-2,-0.1) and (2,-0.1) ..
        (3.6,-1.2);
\end{scope}
 
\draw[thick] (0.5,0.75)
    ellipse[x radius=1.6, y radius=0.8];

\node at (-2.1,-1) {$\mathcal{K}$};
\node at (-2.2,0) {$[K]\setminus\mathcal{K}$};

\node at (-0.3,0.7) {$\mathcal{J}$};
\node at (1.3,0.7) {$\mathcal{I}$};

\node at (-1.7,1.3) {$\mathcal{A}$};
\draw[->, thick] (-1.5,1.3) -- (-1,1.1);
\node at (4,1) {$\mathcal{S}$};
\draw[->,thick] (3.8,0.95) --  (2,0);
\end{tikzpicture}
\caption{A Venn diagram illustrating the sets $\mathcal{K}$, $[K]\setminus \mathcal{K}$, $\mathcal{S}$, $\mathcal{A}$, $\mathcal{I}$, and $\mathcal{J}$. The ambient set is $[K]$.}
\label{fig:venn}
\end{figure}
We have
\begin{align}
	&\sum_{\mathcal{A} \subseteq [K]\setminus \mathcal{K}} (-1)^{|\mathcal{A}|} c_{\mathcal{S}\cap(\mathcal{A}\cup \mathcal{K})}\notag \\
	&=\sum_{\mathcal{I}\subseteq ([K]\setminus \mathcal{K})\cap \mathcal{S}} \sum_{\mathcal{J} \subseteq ([K]\setminus \mathcal{K}) \setminus \mathcal{S}} (-1)^{|\mathcal{I}|+|\mathcal{J}|} c_{\mathcal{I}\cup (\mathcal{S}\cap \mathcal{K})}\\
	&=\sum_{\mathcal{J} \subseteq ([K]\setminus \mathcal{K}) \setminus \mathcal{S}} (-1)^{|\mathcal{J}|}  \sum_{\mathcal{I}\subseteq ([K]\setminus \mathcal{K}) \cap \mathcal{S}} (-1)^{|\mathcal{I}|} c_{\mathcal{I}\cup (\mathcal{S}\cap \mathcal{K})} \\
	&=\begin{cases}
		\sum_{\mathcal{I}\subseteq ([K]\setminus \mathcal{K}) \cap \mathcal{S}} (-1)^{|\mathcal{I}|} c_{\mathcal{I}\cup (\mathcal{S}\cap \mathcal{K})}, & ([K]\setminus \mathcal{K}) \subseteq \mathcal{S}  \\
		0, & ([K]\setminus \mathcal{K}) \not\subseteq \mathcal{S} \label{eq:cancel}
	\end{cases}
\end{align}
Therefore, we only need to focus on the cases for which $([K]\setminus \mathcal{K}) \subseteq \mathcal{S}$. Note that as $n\to \infty$,
\begin{align}
	c_{\mathcal{K}_2} = o(c_{\mathcal{K}_1})
\end{align} 
as long as $\mathcal{K}_1 \subset \mathcal{K}_2 \subseteq [K]$.  Therefore, as $n\to \infty$,
\begin{align}
	&\sum_{\mathcal{I}\subseteq ([K]\setminus \mathcal{K}) \cap \mathcal{S}} (-1)^{|\mathcal{I}|} c_{\mathcal{I}\cup (\mathcal{S}\cap \mathcal{K})} \simeq c_{\mathcal{S} \cap \mathcal{K}}.
\end{align}
Recall that $c_{\emptyset}=1$, and $c_{\mathcal{K}} \simeq \frac{1}{\prod_{k\in \mathcal{K}}M_k}$ for every non-empty $\mathcal{K} \subseteq [K]$.
It follows that 
\begin{align}
	\sum_{\mathcal{I}\subseteq ([K]\setminus \mathcal{K}) \cap \mathcal{S}} (-1)^{|\mathcal{I}|} c_{\mathcal{I}\cup (\mathcal{S}\cap \mathcal{K})}
\end{align}
is non-negative for all sufficiently large $n$.

Therefore, for every sufficiently small $\epsilon$, the above construction yields valid NS-assisted coding schemes $\mZ_{n}$ for all sufficiently large $n$.
 
\subsection{Reliability analysis}
Similar to the previous analysis for the $K=2$ user case, we now analyze the probability of error of the scheme  we described above for general $K$.
The transmitter provides the scheme with $(W_1,\ldots, W_K)$. The scheme generates $X^n$ (a random seed) according to $\mP_X^{\otimes n}$. The seed is used as the input to the broadcast channel, and for each $k\in [K]$, User $k$ receives $Y_k^n$ (a key), which comes from the stochastic mapping $\mN_{Y_k\mid X}^{\otimes n}$ of the seed. The key is then used as the input to the scheme at User $k$. The law of large numbers then implies
\begin{align}
	&\lim_{n\to \infty} \Pr(I_k(X^n,Y_k^n), \forall k\in [K])\\
	&=\lim_{n\to \infty}\Pr\big( (X^n,Y_k^n) \in \mathcal{T}^{(n)}_{\epsilon}(\mP_X\mN_{Y_k\mid X}), \forall k\in [K] \big) \\
	&= 1
\end{align}
It follows that with probability $\to 1$,
\begin{align}
	a_{[K]}(X^n, Y_1^n,\ldots, Y_K^n) = 1.
\end{align}
Therefore, with probability $\to 1$, the scheme provides the correct messages for all $K$ users. It follows that $\lim_{n\to \infty} P_{e,k}(\mZ_n)=0, \forall k\in [K]$.
The rate tuples $(R_1,\ldots, R_K)$ thus achieved are exactly described by Eq. \eqref{eq:achi_rates_K}. Taking $\epsilon\to 0$ concludes the proof. 
\hfil \qed

\section{Proof of Theorem \ref{thm:NSBC_Kuser}: Converse} \label{proof:converse}
Let us first prove the following lemma.
\begin{lemma}[NS data-processing] \label{lem:dp}
	Let $(W,X,Y, \widehat{W})\sim \mP_{WXY\widehat{W}} = \mS_W \mN_{Y\mid X} \mZ_{X\widehat{W}\mid WY}$, where $\mS_W$ is a pmf on $W$, $\mN_{Y\mid X}$ is a conditional pmf of $Y$ given $X$, and $\mZ_{X\widehat{W}\mid WY}$ is a conditional pmf satisfying the non-signaling conditions that $\mZ_{X\mid WY}(x\mid w,y) = \mZ_{X\mid W}(x\mid w)$ is independent of $y$ and $\mZ_{\widehat{W} \mid WY}(\widehat{w}\mid w,y) = \mZ_{\widehat{W}\mid Y}(\widehat{w}\mid y)$ is independent of $w$. Then 
	\begin{align}
		 I_{\mP}(X;Y) \geq I_{\mP}(W;\widehat{W}).
	\end{align}
\end{lemma}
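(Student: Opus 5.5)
The plan is to bound $I(W;\widehat W)$ by $I(X;Y)$ using the structure of the joint pmf together with the two non-signaling conditions. The key observation is that $\mZ_{X\widehat W\mid WY}$ can be written, for any fixed reference, through a ``hidden'' common variable. The cleanest route is probably to introduce an auxiliary joint distribution in which $Y$ is replaced by an independent copy, and to compare it with the true distribution.

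\textbf{Step 1: the decoupled distribution.} Define the auxiliary distribution
\begin{align*}
\mQ_{WXY\widehat W}(w,x,y,\widehat w)=\mS_W(w)\,\mP_Y(y)\,\mZ_{X\widehat W\mid WY}(x,\widehat w\mid w,y),
\end{align*}
where $\mP_Y$ is the true marginal of $Y$ under $\mP$. Under $\mQ$, $W$ and $Y$ are independent by construction. By the second NS condition, $\mQ_{\widehat W\mid WY}(\widehat w\mid w,y)=\mZ_{\widehat W\mid Y}(\widehat w\mid y)$ does not depend on $w$. Hence $W\perp (Y,\widehat W)$ under $\mQ$, and in particular $\mQ_{W\widehat W}=\mS_W\mQ_{\widehat W}$.

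\textbf{Step 2: the key comparison.} I want to show
\begin{align*}
I_\mP(W;\widehat W)\le D(\mP_{W\widehat W}\Vert \mS_W\mQ_{\widehat W})\le D(\mP_{WXY\widehat W}\Vert \mQ_{WXY\widehat W}).
\end{align*}
The first inequality is the standard fact that $I(W;\widehat W)=\min_{R}D(\mP_{W\widehat W}\Vert \mP_W R_{\widehat W})$. The second is data processing (marginalization) for relative entropy.

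\textbf{Step 3: evaluating the divergence.} Now compute $D(\mP\Vert\mQ)$ directly. The ratio of the two full joint pmfs is
\begin{align*}
\frac{\mS_W\,\mN_{Y\mid X}\,\mZ}{\mS_W\,\mP_Y\,\mZ}=\frac{\mN_{Y\mid X}(y\mid x)}{\mP_Y(y)}.
\end{align*}
Taking the expectation of its logarithm under $\mP$ gives $D(\mP\Vert\mQ)=\mathbb{E}_\mP\big[\log \tfrac{\mN_{Y\mid X}(Y\mid X)}{\mP_Y(Y)}\big]$. By the first NS condition, $\mP_{XY}=\mP_X\mN_{Y\mid X}$ with $\mP_X=\sum_w \mS_W\mZ_{X\mid W}$, so $\mP_{Y\mid X}=\mN_{Y\mid X}$. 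This identifies the expectation as $I_\mP(X;Y)$.

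\textbf{Main obstacle.} The delicate point is confirming that $\mP_{Y\mid X}=\mN_{Y\mid X}$, which is exactly where the first NS condition enters. One must check that $\mZ_{X\mid WY}$ does not depend on $y$, so that summing out $\widehat W$ and $W$ leaves $\mP_X(x)\mN_{Y\mid X}(y\mid x)$. Absolute-continuity issues in the divergence (zeros of $\mP_Y$) are harmless, since $\mP_Y(y)=0$ forces the $\mP$-mass to vanish there.
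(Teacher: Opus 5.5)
Your proposal is correct and follows essentially the paper's proof: the same decoupled measure $\mQ_{WXY\widehat W}=\mS_W\mP_Y\mZ_{X\widehat W\mid WY}$, the second NS condition to obtain $\mQ_{W\widehat W}=\mS_W\mQ_{\widehat W}$, data processing under marginalization, and the direct evaluation $D(\mP_{WXY\widehat W}\Vert\mQ_{WXY\widehat W})=I_{\mP}(X;Y)$ via $\mP_{Y\mid X}=\mN_{Y\mid X}$. The only cosmetic difference is that you cite the variational formula $I_{\mP}(W;\widehat W)=\min_{R}D(\mP_{W\widehat W}\Vert\mP_W R_{\widehat W})$, whereas the paper writes out the equivalent identity $D(\mP_{W\widehat W}\Vert\mP_W\mQ_{\widehat W})=I_{\mP}(W;\widehat W)+D(\mP_{\widehat W}\Vert\mQ_{\widehat W})$.
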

\begin{proof}
	The proof applies the data-processing inequality used commonly in NS-assisted converses (cf. \cite{matthews2012linear, fawzi2024MAC}). Under $\mP_{WXY\widehat{W}}$, $(W,X,Y) \sim \mS_W \mZ_{X\mid W} \mN_{Y\mid X}$.  Let $\mP_Y$ be the marginal pmf thus obtained for $Y$, i.e., $\mP_Y(y) = \sum_{w,x}\mP_{W} \mZ_{X\mid W}(x\mid w) \mN_{Y\mid X}(y\mid x)$. Let $\mQ_{WXY\widehat{W}} = \mS_W \mP_Y \mZ_{X\widehat{W}\mid WY}$ be another distribution on $(W,X,Y,\widehat{W})$. Note that under $\mQ_{WXY\widehat{W}}$ one should think of $(W,Y)$ being independently generated according to $\mS_W \mP_Y$ and $(X,\widehat{W})$ are then generated according to $\mZ_{X\widehat{W}\mid WY}$. Then $\mQ_{W \widehat{W}} = \mS_W \mQ_{\widehat{W}}=\mP_W \mQ_{\widehat{W}}$, as $\sum_{x,y}\mQ(w,x,y,\widehat{w}) = \sum_{x,y}\mS(w) \mP_Y(y) \mZ(x,\widehat{w}\mid x,y)= \sum_{y}\mS(w) \mP_Y(y) \mZ(\widehat{w}\mid y) = \mS(w) \sum_y  \mQ(y,\widehat{w})  = \mS(w) \mQ(\widehat{w})$. According to the data-processing inequality for relative entropy, $D(\mP_{WXY\widehat{W}} \Vert \mQ_{WXY\widehat{W}}) \geq D(\mP_{W\widehat{W}} \Vert \mQ_{W\widehat{W}}) = D(\mP_{W\widehat{W}} \Vert \mP_{W}\mQ_{\widehat{W}})=I_{\mP}(W;\widehat{W})+D(\mP_{\widehat{W}}\Vert Q_{\widehat{W}}) \geq I_{\mP}(W;\widehat{W})$. On the other hand, $D(\mP_{WXY\widehat{W}} \Vert \mQ_{WXY\widehat{W}}) = \sum_{w,x,y,\widehat{w}}\mP(w,x,y,\widehat{w}) \log_2(\frac{\mN_{Y\mid X}(y\mid x)}{\mP_Y(y)}) = I_{\mP}(X;Y)$.  
\end{proof}

We shall prove that, if $(R_1,R_2,\ldots, R_K)\in \mathbb{R}_{\geq 0}^K$ is achievable by NS-assisted coding schemes, then $(R_1,R_2,\ldots, R_K) \in \mathcal{R}_{\Sato}(\mN_{Y_1Y_2\cdots Y_K})$. In the following, for $\mathcal{K} \subseteq [K]$, we write $A_{\mathcal{K}} \triangleq (A_k)_{k\in \mathcal{K}}$. 
Let $(R_1,R_2,\ldots, R_K)$ be any rate tuple achievable by NS-assisted coding schemes. By definition, there exists a sequence of $(M_1^{(n)},M_2^{(n)},\ldots, M_K^{(n)},n)$ NS-assisted coding schemes $(\mZ_n)_{n\in \mathbb{N}}$, such that $\lim_{n\to \infty} P_{e,k}(\mZ_n) = 0$ and $\liminf_{n\to \infty}\frac{1}{n} \log_2 M_k^{(n)} \geq R_k, \forall k\in [K]$.

For each non-empty subset $\mathcal{K} \subseteq  [K]$, consider the users indexed by $k\in \mathcal{K}$. Let $p_{e,\mathcal{K}}^{(n)} \triangleq \Pr(\widehat{W}_{\mathcal{K}} \neq W_{\mathcal{K}})$. Let $H_b(\cdot)$ denote the binary entropy function. We have
\begin{align}
	&(1-p_{e,\mathcal{K}}^{(n)}) \sum_{k\in \mathcal{K}}\log_2 M_k^{(n)} - H_b(p_{e,\mathcal{K}}^{(n)}) \notag \\
	&\leq I(W_{\mathcal{K}}; \widehat{W}_{\mathcal{K}}) \label{eq:use_fano} \\
	&\leq I(W_{[K]}; \widehat{W}_{\mathcal{K}})\\
	&\leq I(X^n; Y_{\mathcal{K}}^n) \label{eq:use_dp}
\end{align}
where Step \eqref{eq:use_fano} follows from Fano's inequality, and Step \eqref{eq:use_dp} follows by applying Lemma \ref{lem:dp} with $W = W_{[K]}$, $X=X^n$, $Y = Y_{\mathcal{K}}^n$, $\widehat{W} = \widehat{W}_{\mathcal{K}}$. 
Since $\lim_{n\to \infty} P_{e,k}(\mZ_n) = 0$ for all $k\in \mathcal{K}$, we have  $\lim_{n\to \infty}p_{e,\mathcal{K}}^{(n)} = 0$.   Therefore, 
\begin{align}
	n\sum_{k\in \mathcal{K}} R_k-o(n) \leq  I(X^n;Y_{\mathcal{K}}^n).
\end{align}

Due to the same-marginals property (Remark \ref{rem:same_marginals}), the probability of decoding error for each user with index in $\mathcal{K}$ will not change if we replaced their channel $\mN_{Y_{\mathcal{K}}\mid X}$ by any channel $\widetilde{\mN}_{\widetilde{Y}_{\mathcal{K}}\mid X}$ such that $\widetilde{\mN}_{\widetilde{Y}_{k}\mid X} = \mN_{Y_k\mid X}, \forall k\in \mathcal{K}$. 
Therefore, for any such channel $\widetilde{\mN}_{\widetilde{Y}_{\mathcal{K}}\mid X}$ and $(\widetilde{Y}_{\mathcal{K}}^n, \widetilde{W}_{\mathcal{K}})$ such defined, the same $(\mZ_n)_{n\in \mathbb{N}}$ also yields that $\lim_{n\to \infty}P_{e,k}(\mZ_n) =0, \forall k\in \mathcal{K}$, and therefore $\lim_{n\to \infty}p_{e,\mathcal{K}}^{(n)} =0, \forall k\in \mathcal{K}$. 
Applying the same argument for each of such $\widetilde{\mN}_{\widetilde{Y}_{\mathcal{K}}\mid X}$, we have that
\begin{align}
	n\sum_{k\in \mathcal{K}} R_k - o(n) \leq \min_{\widetilde{\mN}_{\widetilde{Y} _{\mathcal{K}}\mid X}}  I(X^n;\widetilde{Y}_{\mathcal{K}}^n),
\end{align}
where the minimization is over all channels $\widetilde{\mN}_{\widetilde{Y}_{\mathcal{K}}\mid X}$ with the same $|\mathcal{K}|$ conditional marginal pmfs as $\mN_{Y_{\mathcal{K}}\mid X}$. 

Therefore, for each non-empty subset $\mathcal{K} \subseteq [K]$ and $n\in \mathbb{N}$,
\begin{align}
	&n\sum_{k\in \mathcal{K}} R_k -o(n) \notag\\
	 &\leq \min_{\widetilde{\mN}_{\widetilde{Y}_{\mathcal{K}}\mid X}}   I(X^n;\widetilde{Y}_{\mathcal{K}}^n) \\
	&= \min_{\widetilde{\mN}_{\widetilde{Y}_{\mathcal{K}}\mid X}}  \sum_{i=1}^nI(X^n;  \widetilde{Y}_{\mathcal{K},i}  \mid \widetilde{Y}_{\mathcal{K}}^{i-1})\\
  &\leq \min_{\widetilde{\mN}_{\widetilde{Y}_{\mathcal{K}}\mid X}}  \sum_{i=1}^n I(X_i; \widetilde{Y}_{\mathcal{K},i}) \label{eq:use_Markov1}\\
  &= n \min_{\widetilde{\mN}_{\widetilde{Y}_{\mathcal{K}}\mid X}}   I(X_{T}; \widetilde{Y}_{\mathcal{K},T} \mid T) \\
  &\leq n \min_{\widetilde{\mN}_{\widetilde{Y}_{\mathcal{K}}\mid X}}  I(X_{T}; \widetilde{Y}_{\mathcal{K},T}) \label{eq:use_Markov2}\\
  &= n J(\mP_{X_T},(\mN_{Y_k\mid X})_{k\in \mathcal{K}}) \label{eq:introduce_PXT}
\end{align}
where in Step \eqref{eq:use_Markov1} we introduce $T$ as a time-sharing variable uniformly distributed over $[n]$ and is independent of $(W_{[K]},X^n,\widetilde{Y}_{\mathcal{K}})$. 
Step \eqref{eq:use_Markov1} follows from the Markov chain $(X^n,\widetilde{Y}_{\mathcal{K}}^{i-1}) \leftrightarrow X_i \leftrightarrow \widetilde{Y}_{\mathcal{K},i})$. Step \eqref{eq:use_Markov2} follows from the Markov chain $T\leftrightarrow X_{T} \leftrightarrow \widetilde{Y}_{\mathcal{K},T}$. In Step \eqref{eq:introduce_PXT}, $\mP_{X_T}\in \mathcal{P}(\mathcal{X})$ is the distribution for $X_T$. Note that, for each $n$, the distribution $\mP_{X_T}$ is the same for different $\mathcal{K}$ because it is completely determined by $\mZ_n$, independent of the channel. Taking $n\to \infty$ concludes the proof.  \hfill \qed

\section{Proof of Corollary \ref{cor:BSSC}} \label{proof:BSSC}
Let $\mN_{Y_1Y_2\mid X}$ be the BSSC. Let us first show that $\max_{(R_1,R_2) \in \mathcal{C}^{\NS}}(R_1+R_2) \geq 0.5$. For this let us focus on the region $\mathcal{R}_{\Sato}(\mN_{Y_1Y_2\mid X})$.
Note that the same-marginals class of BSSC contains only one channel, i.e., $|\Gamma(\mN_{Y_1\mid X}, \mN_{Y_2\mid X})| = 1$. This facilitates the evaluation of $J(\mP_X,(\mN_{Y_1\mid X}, \mN_{Y_2\mid X}))$ since there is no minimization to be done. Let $H_b(x)=-x\log_2(x)-(1-x)\log_2(1-x)$ denote the binary entropy function, with the convention that $0\log_2(0) = 0$.
With $\mP_X$ chosen to be the uniform distribution over $\{0,1\}$, we obtain $I(X;Y_k) = H_b(\frac{1}{4})-\frac{1}{2} \approx 0.31, \forall k\in \{1,2\}$ and $I(X;Y_1,Y_2) = 0.5$. It follows that $(0.25,0.25)\in \mathcal{C}^{\NS}(\mN_{Y_1Y_2\mid X})$ and $\max_{(R_1,R_2)\in \mathcal{C}^{\NS}} (R_1+R_2) \geq 0.5$.

Let us then show that $\max_{(R_1,R_2)\in \mathcal{R}_{\KSp}} (R_1+R_2) < 0.45$. Firstly, due to the symmetry in the BSSC, we must have that $\max_{(R_1,R_2) \in \mathcal{R}_{\KS,1}}(R_1+R_2) = \max_{(R_1,R_2) \in \mathcal{R}_{\KS,2}}(R_1+R_2)$. Then,  by the definition of convex hull, any point $(R_1,R_2)\in \mathcal{R}_{\KSp}$ can be written as $(R_1,R_2) = \lambda(R_1',R_2') + (1-\lambda)(R_1'',R_2'')$ where $(R_1',R_2') \in \mathcal{R}_{\KS,1}$, $(R_1'',R_2'') \in \mathcal{R}_{\KS,2}$, and $\lambda \in [0,1]$. Therefore $R_1+R_2 = \lambda(R_1'+R_2') + (1-\lambda) (R_1''+R_2'') \leq \max \{R_1'+R_2', R_1''+R_2''\}$. It follows that $\max_{(R_1,R_2) \in \mathcal{R}_{\KSp}}(R_1+R_2) = \max_{(R_1,R_2) \in \mathcal{R}_{\KS,1}} (R_1+R_2)$. Now let us focus on $\mathcal{R}_{\KS,1}(\mN_{Y_1Y_2\mid X})$. Note that for any $\mP_{XU}\in \mathcal{P}(\mathcal{X} \times \mathcal{U})$, $I(X;Y_1) +I(U;Y_2) \geq I(X;Y_1\mid U)+I(U;Y_2)$ because $U\leftrightarrow X \leftrightarrow Y_1$ form a Markov chain. In other words, the sum-rate bound is always tight for $R_1+R_2$. Therefore, $\max_{(R_1,R_2)\in \mathcal{R}_{\KS, 1}} (R_1+R_2) = \max_{\mP_{XU}} \big(I(X;Y_1\mid U)+I(U;Y_2)\big)$. 

Parameterize $\mP_{XU}$ such that for each $u\in \mathcal{U}$, $\mP_U(u) = p_u, \mP_{X\mid U}(1\mid u) \triangleq q_u$, and $\mP_{X}(1) \triangleq q$. Note that for each $u\in \mathcal{U}$, $I(X;Y_1\mid U=u) = H_b(\frac{q_u}{2})-q_u$, and  $I(X;Y_2\mid U=u) = H_b(\frac{1-q_u}{2})-(1-q_u)$. Also, $I(X;Y_2) = H_b(\frac{1-q}{2})-(1-q)$. Then the objective
\begin{align}
	&I(X;Y_1\mid U) + I(U;Y_2) \notag \\
	&= I(X;Y_1\mid U) - I(X;Y_2\mid U) + I(X;Y_2) \label{eq:use_markov_BSSC} \\
	&= \sum_{u\in \mathcal{U}} p_u \Big( H_b(\frac{q_u}{2})-q_u \Big) - \sum_{u\in \mathcal{U}} p_u \Big( H_b(\frac{1-q_u}{2})-(1-q_u) \Big) + \Big(H_b(\frac{1-q}{2})-(1-q)\Big) \label{eq:to_bound}
\end{align}
where Step \eqref{eq:use_markov_BSSC} is because $U\leftrightarrow X \leftrightarrow Y_2$ forms a Markov chain. 

For $x\in [0,1]$, define 
\begin{align}
	f(x) &\triangleq H_b(\frac{x}{2})-x \\
	&= -\frac{x}{2}\log_2(\frac{x}{2})-(1-\frac{x}{2})\log_2(1-\frac{x}{2})
\end{align}
and 
\begin{align}
	g(x) &\triangleq f(x)-f(1-x) \\
	&= -\frac{x}{2}\log_2(\frac{x}{2}) - (1-\frac{x}{2})\log_2(1-\frac{x}{2}) \notag \\
	&~~~~+ \frac{1-x}{2} \log_2(\frac{1-x}{2}) + \frac{1+x}{2}\log_2(\frac{1+x}{2}) \\
	&~~~~- 2x +1
\end{align}
It can be shown that $g(x) < \frac{1}{9}, \forall x\in [0,1]$, and that $f(x)<1/3, \forall x\in [0,1]$. It follows that
\begin{align}
	\eqref{eq:to_bound} &= \sum_{u\in \mathcal{U}} p_u \big( f(q_u)-f(1-q_u) \big) + f(1-q) \\
	&= \sum_{u\in \mathcal{U}} p_u g(q_u) + f(1-q) \\
	&<\frac{1}{9}+\frac{1}{3} \\
	&<0.45
\end{align}

\section*{Acknowledgment}
The authors acknowledge the assistance of ChatGPT 5.6 in this work. In particular, after the authors pointed ChatGPT 5.6 to the authentication solution in \cite{Yao_Jafar_NSCWS}, a series of discussions proceeding through toy examples led to an initial achievability proof for the $K=2$ case. Following a sequence of simplifications and elimination of unnecessary steps, the extension to arbitrary $K$ was obtained by recognizing the underlying pattern revealed by the refined $K=2$ construction. ChatGPT 5.6 also assisted in deriving the proof of Lemma \ref{lem:typical_off_K} without the additional $o(n)$ term in the exponent, identifying the BSSC example for which the inclusion is strict, drawing the figures that appear in the paper, and literature search. The authors developed the final proofs and accept full responsibility for the accuracy and originality of the results.

\bibliographystyle{IEEEtran}
\bibliography{../../bib_file/yy.bib}
\end{document}